\theoremstyle{plain}
\newtheorem{thm}{Theorem}[section]
\newtheorem{lem}[thm]{Lemma}
\newtheorem{prop}[thm]{Proposition}
\newtheorem{example}[thm]{Example}
\theoremstyle{definition}
\newtheorem{defn}{Definition}[section]
\theoremstyle{remark}
\newtheorem{rem}{Remark}[section]
\numberwithin{equation}{section}
\newcommand{\oh}{\overline{H}}
\newcommand{\lh}{\underline{H}}
\newcommand{\z}{\mathbb{Z}}
\newcommand{\com}{\mathbb{C}}
\newcommand{\q}{\mathbb{Q}}
\newcommand{\rr}{\mathbb{R}}
\newcommand{\n}{\mathbb{N}}
\newcommand{\h}{\mathbb{H}}
\newcommand{\bi}{\mathbf{i}}
\newcommand{\bo}{\mathbf{\Omega}}
\newcommand{\one}{\mathds{1}}
\newcommand{\ii}{\textbf{i}^{(n)}}
\newcommand{\ion}{\textbf{i}^{(n)}\in\Omega_2^{(n)}}
\newcommand{\zz}{\mathds{Z}}
\newcommand{\A}{\mathcal{A}}
\newcommand{\ket}[1]{\vert#1\rangle}
\newcommand{\jj}{\textbf{j}^{(n)}}
\begin{document}

\date{}

\title{Quantum entropy and complexity}

\author{F. Benatti$^{1,2}$, S. Khabbazi Oskouei$^3$, A. Shafiei Deh Abad$^4$\\[3mm]
\small\it $^1$Dipartimento di Fisica, Universit\`a di Trieste, I-34151 Trieste, Italy \\[1mm]  
\small\it $^2$Istituto Nazionale di Fisica Nucleare, Sezione di Trieste, I-34151 Trieste, Italy\\[1mm]
\small\it $^3$
Department of Mathematics, Islamic Azad University,\\[1mm]
\small\it Varamin-Pishva Branch, 33817-7489, Iran
\\[1mm]
\small\it $^4$Department of Mathematics, School of Mathematics, Statistics and Computer Science\\[1mm]
\small\it  College of Science, University of Tehran, Tehran, Iran
}

\maketitle

\begin{abstract}
We study the relations between the recently proposed machine-independent quantum complexity of P. Gacs~\cite{Gacs} and the
entropy of classical and quantum systems. On one hand, by restricting Gacs complexity to ergodic classical dynamical systems, we retrieve the equality between the Kolmogorov complexity rate and the Shannon entropy rate derived by
A.A. Brudno~\cite{Brudno}. On the other hand, using the quantum Shannon-Mc Millan theorem~\cite{BSM}, we show that such an equality holds densely in the case of ergodic quantum spin chains.
\end{abstract}

\section{Introduction}

The concept of algorithmic complexity introduced by
Kolmogorov, Solomonoff and Chaitin plays a fundamental role in connecting
the ergodic properties of classical dynamical systems to the predictability of
their trajectories~\cite{Brudno}.

Intuitively, any classical dynamical system can be encoded into a symbolic model by means of a suitable coarse-graining of the phase-space into a finite number of disjoint elementary cells and its trajectories can then be made correspond to sequences of symbols.
Consequently, portions of trajectories can ultimately be described by the shortest programs that, when run by universal Turing machines, provide as outputs the associated strings of symbols.
These shortest programs provide the best compressed descriptions of portions of trajectories, once symbolically encoded: the number of their symbols defines their algorithmic complexities with respect to the chosen coarse-graining. Dividing the number of symbols of the shortest description of a string by the number of symbols of the string itself and going with them to the limit of an infinitely long string, one finally associates an algorithmic complexity rate to any given trajectory with respect to the chosen coarse-graining. An absolute complexity rate can then be retrieved by taking the supremum over all possible finite coarse-grainings.

By its very definition, algorithmic complexity does not depend on a pre-assigned probability distribution over the ensemble of strings of a certain length. If a probability distribution is given, one associates an entropy rate to the statistical ensemble of sequences associated to the symbolic dynamical trajectories.  The largest entropy rate computed with respect to all possible finite coarse-grainings is known as Komogorov-Sinai dynamical entropy~\cite{CFS,billingsley} and a theorem of Brudno~\cite{Brudno} shows that, for classical ergodic systems, the algorithmic complexity rate equals the dynamical entropy rate for almost all trajectories with respect to the pre-assigned ergodic probability distribution.
Brudno's theorem can be seen as an extension to generic ergodic dynamics of Shannon-Mc Millan-Breiman theorem~\cite{BSM} which states that, for ergodic information sources, the Shannon entropy rate provides the maximal compression rate of the information source.

A natural question to ask is whether and how these notions and relations may be extended to a quantum setting; while the von Neumann entropy is the agreed upon quantum counterpart of the Shannon entropy, there are instead several proposals of quantum algorithmic
complexities~\cite{Berthiaume, Vitanyi, Mora, Gacs} and of quantum dynamical
entropies~\cite{CNT, Alicki, Sl, Do}.

Of the different quantum algorithmic complexities, we shall consider the one proposed by Gacs in~\cite{Gacs}: it extends to the quantum realm the notion of algorithmic probability thereby avoiding  reference to universal quantum Turing machines.

Purpose of this work is twofold: on one hand, we prove that, when restricted to ergodic classical dynamical systems, the Gacs complexity rate equals the Kolmogorov-Sinai dynamical entropy as by Brudno's theorem.
On the other hand, we show that, in the case of an ergodic quantum spin $1/2$ chain,  the Gacs complexity rate of typical one-dimensional projectors equals the von Neumann entropy rate of the chain.
Since, for sufficiently rapidly clustering states over the chain, the latter entropy rate equals the Connes-Narnhofer-Thirring (CNT) quantum dynamical entropy of the shift automorphism, the latter result is a non-commutative instance of the Brudno's relation.
With respect to a previous result along the same lines~\cite{FSA}, the present one is stronger in that, like in the classical formulation, it also specifies that the Gacs complexity rate is closed to the von Neumann entropy rate on a suitably dense set of pure state projections. This further information is achieved by means of a quantum generalization of the Shannon-McMillan theorem~\cite{BSM}.

%%% ----------------------------------------------------------------------

\goodbreak

%%% ----------------------------------------------------------------------

\section{Classical Dynamical Systems, Kolmogorov-Sinai Dynamical Entropy and Complexity}

In this section we briefly overview the basic tools concerning the dynamical entropy of classical dynamical systems, their algorithmic complexity and their relations.

\subsection{Kolmogorov-Sinai Dynamical Entropy}
\hfill\bigskip

\noindent
Classical, discrete-time, dynamical systems can be generically described by triplets  $(\mathcal{X}, T, \nu)$, where $\mathcal{X}$ is a measure space, called \textit{phase-space}, endowed with a $\Sigma$-algebra of measurable sets, $T$ is a \textit{measurable map} such that for any $A\in\Sigma$, $T^{-1}(A)\in\Sigma$ and $\nu$ is a \textit{$T$-invariant} probability measure on $\mathcal{X}$.

Any phase-space trajectory through a point $x\in\mathcal{X}$ then amounts to the collection $\{T^nx\}_{n\in\mathbb{Z}}$ of the images $T^nx$ of $x$ under the action of integer powers of the dynamical map $T$.

Finite measurable partitions $\mathcal{P}$ on $\mathcal{X}$ are finite collections of disjoint  measurable sets, \textit{atoms}, $P_1, P_2, \ldots, P_k$, such that
$\mathcal{X}=\cup_{i=1}^k P_i$.
They provide a \textit{coarse-graining} of the phase space with respect to which trajectories can be encoded by sequences $\mathbf{i}=\{i_j\}_{j\in\mathbb{Z}}$ of symbols $i_j\in\{1,2,\ldots,k\}$ such that $i_j$ labels the atom $P_{i_j}$ of $\mathcal{P}$ reached by $x$ at time $t=j$ so that $T^jx\in P_{i_j}$, or equivalently such that
$x\in T^{-j}(P_{i_j})$.

The set of strings of length $n$, $\ii=i_1i_2\ldots i_n$, will be denoted by $\Omega^{(n)}_k$, by $\Omega^*_k=\bigcup_{n=0}^\infty\Omega_k^{(n)}$ the set of strings of any finite length, and by $\Omega_k$ the set of all sequences of symbols from the alphabet $\{1,2,\ldots,k\}$.

In this way, by means of a finite measurable partition, to any classical dynamical system one associates a symbolic model $(\widetilde{\Omega}_k, \theta, \nu_\mathcal{P})$, where
\begin{enumerate}
\item
$\widetilde{\Omega}_k\subset\Omega_k$ is the subset of sequences corresponding to all trajectories of $(\mathcal{X}, T, \nu)$. This subset id endowed with the $\sigma$-algebra generated by
\textit{cylinder sets} consisting of all sequences whose elements have fixed
values in chosen intervals:
$$
C^{[j, k]}_{i_j i_{j+1}\ldots i_k}=\{\bi\in \Omega_k: \bi_{j+l}=\bi_{j+l}, \, l=0,1, \ldots, k-j\}\ ,
$$
where $\bi_\ell$ denotes the $\ell$-th entry of the string $\bi\in\widetilde{\Omega}^*_k$.
\item
$\theta$ is the shift dynamics along the sequences in $\Omega_k$:
$(\theta(\bi))_j=\bi_{j+1}$.
\item
The probability measure $\nu_{\mathcal{P}}$ is defined by the volumes of the cylinders
$C^{[0,n-1]}_{\ii}$:
\begin{eqnarray}
\nonumber
\nu_{\mathcal{P}}(\ii)&:=&\nu\Big(C^{[0,n-1]}_{\ii}\Big)\\
\label{refpart}
&=&\nu\Big(P_{i_0}\bigcap T^{-1}(P_{i_1})\bigcap \dots
\bigcap T^{-n+1}(P_{i_{n-1}}\Big)\ ,
\end{eqnarray}
where, given $i_j\in\{1,2,\ldots,k\}$,
$$
T^{-j}\Big(P_{i_j}\Big)=\Big\{\bi\in\Omega_k\,:\,(T^j\bi)_j=i_j\Big\}\ .
$$
\end{enumerate}

The Kolmogorov-Sinai entropy of $(\mathcal{X}, T, \nu)$ is given by the maximal Shannon entropy rate over all its symbolic models.
Namely, given a finite, measurable partition $\mathcal{P}$, its refinement
$\mathcal{P}^{(n)}$, namely the finite partition  whose atoms are the intersections in~\eqref{refpart}, has entropy
\begin{equation}
\label{Shannonent}
H_\nu(\mathcal{P}^{(n)}):=-
\sum_{\ii\in\Omega_k^{(n)}}\nu_{\mathcal{P}}(\ii)\log\nu_{\mathcal{P}}(\ii)\ .
\end{equation}
Because the probability measure $\nu$ is assumed to be $T$-invariant, when the string length goes to infinity, the Shannon entropy per symbol
$\displaystyle
\frac{1}{n}\,H_\nu(\mathcal{P}^{(n)})$ tends to the limit
\begin{equation}
\label{KSp}
h_\nu^{KS}(T, \mathcal{P}):=
\lim_{n\to\infty}\frac{1}{n}H_\nu(\mathcal{P}^{(n)})=\inf_n\frac{1}{n}H_\nu(\mathcal{P}^{(n)})\, .
\end{equation}
The Kolmogorov-Sinai (KS-) entropy of $(\mathcal{X}, T, \nu)$ is then defined by eliminating the dependence on the chosen partition:
\begin{equation}
\label{KS}
h_\nu^{KS}(T):=\sup_\mathcal{P}h_\nu^{KS}(T, \mathcal{P})\, ,
\end{equation}

\subsection{\bf Kolmogorov Complexity and Universal Probability}
\hfill
\bigskip

\noindent
We now briefly review the concepts of computability,
Kolmogorov complexity and universal semi-measures.

\begin{enumerate}
\item
A function from $\n^k$ to $\n$ is called partially computable if it
is computed by a Turing Machine ~\cite{Da} which, in the following, will be identified
with a suitable program or algorithm.
\item
There is a one-to-one correspondence $n\rightarrow P_n$ between the set $\n$ of natural numbers and the set of all programs with  fixed $k$ inputs~\cite{Da}. The
function computed by the program $P_n$ will be denoted by $\phi_n$.
\item
The mapping $n\rightarrow \phi_n$ provides a way of enumerating partially
computable functions. By a \textit{universal, partially computable
function} we mean a partially computable function
$\phi$ from $k+1$ integer inputs $(\{x_j\}_{j=1}^k, n)\in \n^k\times\n$ to $\n$ such that for each $\phi_n$,
$$
\phi(x_1, x_2, \ldots , x_k, n)=\phi_n(x_1, x_2, \ldots , x_k)\ .
$$
\item
Let $f:\n\times \n\to \rr$ be a function from integers to real numbers; for each $n\in\n$, let $f_n$ be defined by $f_n(x)=f(x,n)$. A function
$g:\n\to\rr$ is called \textit{lower semi-computable} if there
exists a computable function $f:\n\times\n\to\q$ into the rational numbers such that the
sequence $f_n$ is an increasing sequence and
$\lim_{n\rightarrow\infty}f_n=g$, for more details see \cite{FSA}.
A function $g:\n\to\rr$ is called \textit{upper semi-computable} if
$-g$ is lower semi-computable and \textit{computable} if it is lower and upper semi-computable.
\item
Given the set $\Omega^*_2$ of binary strings of any length, the map
\begin{equation}
\label{enc}
\Omega^*_2\ni\ii\mapsto\tau(\ii)=2^{n+1}- \sum_{k=1}^n i_k 2^{i_k}\ ,
\end{equation}
defines a one-to-one correspondence between $\Omega^*_2$ and $\n$.
\item
A function $f:\Omega^*_2\to \Omega^*_2$ is called \textit{partially
computable}, respectively \textit{computable} if the function
$\tau\circ f \circ \tau^{-1}:\n\to\n$ is partially computable,
respectively computable.
\item
Let $x$ and $y$ be two elements of $\Omega^*_2$ we say that $x$ is a
\textit{prefix} of $y$ if there is an element $z\in\Omega^*_2$ such
that $x.z=y$, where $x.z$ denotes the concatenation of $x$ and $z$.
A subset $S\subseteq\Omega^*_2$ is called \textit{prefix-free} if none
of its elements is a prefix of another element of $S$.
\item
A partially computable function on $\Omega^*_2$ is called a \textit{prefix-free
function} if its domain is a prefix-free subset of $\Omega^*_2$. In
the following by a \textit{(prefix)(universal) machine} we mean a
\textit{prefix-free, universal partially computable function}.
\end{enumerate}

\begin{defn}
\label{defAC}
A function $\mu:\n\to\rr$ is called a \textit{semi-computable,
semi-measure} if it is a positive semi-computable function such that
$\sum_x\mu(x)\leq 1$.
It has been shown (see for instance \cite{Vitanyi}) that there exists  a universal semi-computable  semi-measure $\mu$ in the sense that, for any
semi-computable semi-measure $\nu$, there exists a constant number
$c_\nu>0$ such that $c_\nu\,\nu(x)\leq\mu(x)$ for all $x\in\n$.
\end{defn}

The algorithmic complexity is a probability-independent measure of randomness of single
binary strings $\ii\in\Omega^*_2$ that hinges upon how difficult it is to describe them by algorithms, namely by binary programs $p$, that read by any universal Turing machine $\mathcal{U}$, reproduce those strings as outputs: $\mathcal{U}[p]=\ii$. This measure of complexity was introduced by Kolmogorov and Solomonoff~\cite{Solomonoff} and further elaborated by Chaitin~\cite{Chaitin}.

\begin{defn}
The \textit{algorithmic complexity}, or Kolmogorov complexity, of a
binary string $\ii\in \Omega^*_2$, is measured by the length, that is by the number of bits $\ell(p)=m^*$, of any shortest binary program $p$ such that
$\mathcal{U}[p]=\bi^{(n)}$:
\begin{equation}
\label{algcomp0} 
K(\bi^{(n)})=\min_p\Big\{\ell(p)\ :\
\mathcal{U}[p]=\bi^{(n)}\Big\}\ ,
\end{equation}
While Kolmogorov complexity is based on generic universal Turing machines,  Chaitin  complexity is instead based upon prefix-free, universal Turing machines:
\begin{equation}
\label{algcomp1} C(\bi^{(n)})=\min_p\Big\{\ell(p)\ :\
\mathcal{U}_{pf}[p]=\bi^{(n)}\Big\}\ ,
\end{equation}
where $\mathcal{U}_{pf}$ denotes a prefix-free, universal Turing machines.
\end{defn}
\medskip

\begin{rem}
\label{rem1}
~\begin{enumerate}
\item
Algorithmic complexity does not depend on the specific universal Turing machine $\mathcal{U}$.
Indeed, because of universality, any of them, $\mathcal{U}_1$, can reproduce the action of any other one, $\mathcal{U}_2$, so that the differences between the algorithmic complexities provided by them are related by an additive constant depending only on the translation program that makes $\mathcal{U}_1$ mimick $\mathcal{U}_2$, but not on the input string.
\item
Roughly speaking, because of the presence of patterns usable for compression, regular strings $\ii$ have complexities that scale as the $\log$ of the number of their bits, $m^*\simeq\log n$. On the contrary, complex strings are expected to be reproducible only by listing their bits, $m^*\simeq n$. For large $n\in\n$, most strings are expected to have large algorithmic complexity and be incompressible. Indeed, the number of binary programs with length smaller than $c>0$ cannot be larger than $2^c$ so that the cardinality of the set of strings $\bi$ with complexity $C(\bi)< c$ can be estimated by
\begin{equation}
\label{compression}
\# \{\bi | \bi\in\Omega^*_2 : C(\bi) < c   \} < 2^c\ .
\end{equation}
\item
Kolmogorov and prefix-free complexities are such that~\cite{Vitany}:
\begin{equation}
\label{re:kol}
K(\ii)\leq C(\ii)+2 \log{\ell(p)} + c_p \ ,
\end{equation}
where $c_p$ is a constant independent of $n$, while $\ell(p)$ serves to specify
where the program $p$ stops, an information which is necessary for the Kolmogorov complexity since, unlike for prefix-free universal Turing machines, another program $q$ can always be appended to the program $p$.
\item
As shown by ~\cite{Zvonkin70thecomplexity}, prefix-free complexity and universal
semi-measure are related by
\begin{equation}
\label{thm:levin2}
K(\bi)\overset{+}{=} -\log \mu(\bi)\ ,\quad \forall \bi\in\Omega^*_2\ ,
\end{equation}
where $f\overset{+}{=} g$ means that there exist constants $c_1$ and $c_2$
such that $f(\bi)\leq g(\bi)+ c_1$ and $g(\bi)\leq f(\bi)+c_2$, for
any string $\bi\in\Omega^*_2$.
\qed
\end{enumerate}
\end{rem}

\subsection{Brudno Relation}
\hfill
\bigskip

\noindent
As in the case of the entropy rate, we now consider the complexity rate,
namely, the complexity per symbol in the limit of larger strings:
\begin{equation}
\label{defci}
k(\bi):=\limsup_{n\to+\infty}\frac{1}{n}K(\ii)=\limsup_{n\to+\infty}\frac{1}{n}C(\ii)=:c(\bi)\ ,
\end{equation}
where $\ii$ is the initial prefix of length $n$ of the sequence in $\bi\in\Omega_k$ from a suitable alphabet $\{1,2,\ldots,k\}$.

Given a trajectory $\{T^jx\}_{j\in\z}$ through a point $x$ of a given phase-space $\mathcal{X}$ and a finite measurable partition, we shall denote by $\ii_{\mathcal{P}}(x)$ the string consisting of the labels of the atoms of $\mathcal{P}$ visited by $T^jx$ at times $0\leq j\leq n-1$. Then, the algorithmic complexity rate of a classical trajectory with respect to the coarse-graining of $\mathcal{X}$ given by $\mathcal{P}$ amounts to
\begin{equation}
\label{compratep}
c_{\mathcal{P}}(x):=\limsup_n\frac{1}{n}C(\ii_\mathcal{P}(x))\ .
\end{equation}
Finally, the partition-independent complexity of the trajectory through $x\in\mathcal{X}$ can then be defined by
\begin{equation}
\label{comprate}
c(x):=\sup_\mathcal{P} c_{\mathcal{P}}(x)\ .
\end{equation}
Notice that, because of \eqref{defci}, one can use the prefix algorithmic
complexity $K(\ii_{\mathcal{P}}(x))$ and obtain the rate $k(x)$ for all $x\in\mathcal{X}$ and $k(x)$ coincides with $c(x)$.

If $(\mathcal{X}, T, \nu)$ is  a reversible ergodic dynamical system, the
Shannon-Mc Millan-Breiman theorem connects the compressibility of a symbolic trajectory through $x\in\mathcal{X}$ to the KS-entropy \cite{billingsley}:
\begin{equation}
\label{thm:shannon-macmilla-breiman}
\lim_{n\to\infty} -\frac{1}{n}\log\nu(P_{\ii_{\mathcal{P}}(x)})=h^{KS}_\nu(T,
P)\quad \nu-a.e\ .
\end{equation}

Based on this, Brudno's theorem  \cite{Brudno} equates the algorithmic complexity rate for both prefix-free and non prefix-free  universal machines with the Kolmogorov-Sinai entropy for almost all $x\in\mathcal{X}$:
\begin{equation}
\label{fullBrudno}
c(x)=k(x)=h^{KS}_\nu(T,
P)\quad \nu-a.e\ .
\end{equation}

\begin{rem}
\label{rembrudno}
When the classical dynamical system is an ergodic information source,
$(\Omega_k,\theta,\pi)$, where $\theta$ is the left shift on $\Omega_k$, Brudno's theorem reduces to the assertion that (see \eqref{defci}):
\begin{equation}
\label{Brudnoeasy}
c(\bi)=k(\bi)=\limsup_{n\rightarrow\infty}\frac{1}{n}C(\textbf{i}^{(n)})=\limsup_{n\rightarrow\infty}\frac{1}{n}K(\textbf{i}^{(n)})=h^{KS}_\pi(\theta)\ ,
\end{equation}
for almost all sequences $\textbf{i}\in\Omega_k$ with respect to $\pi$.
\qed
\end{rem}

\section{Quantum Spin Chains}

Aim of the following sections is to extend the use of the classical notions and results introduced so far, in particular Brudno's relation,  to quantum systems.
Especially in view of the latest development in quantum information, communication and computation theories, these extensions may find applications in foundational issues.

As simple working instances of quantum systems that may comprise infinitely many degrees of freedom, we will focus upon quantum spin chains, namely upon one-dimensional lattices each site supporting a finite level quantum (spin) system that we shall fix to consist of two levels and thus to be described by a $2\times 2$ matrix algebra $M_2(\com)$.

Mathematically speaking, a quantum spin chain is the $C^*$-algebra that arises from the norm completion of local quantum spin algebras
\begin{equation}
\label{algtensor} M_{[-n,n]}=\underbrace{M_2(\com)\otimes
M_2(\com)\otimes\cdots M_2(\com)}_{2n+1\quad times} =M^{\otimes
2n+1}_2\ .
\end{equation}
In the norm-topology (the
norm is the one which coincides with the standard matrix-norm on
each local algebra) the limit $n\to+\infty$ of the nested sequence
$\{M_{[-n,n]}\}_{n\in\n}$ gives rise to the norm-complete infinite dimensional
\textit{quasi-local} algebra~\cite{BraRob}
\begin{equation}
\label{infalg} \mathcal{M}:=\lim_{n\to\infty} M_{[-n,n]}\ .
\end{equation}
Any local spin operator $A\in M_{[-n,n]}$ is naturally
embedded into $\mathcal{M}$ as
\begin{equation}
\label{embedding} M_{[-n,n]}\ni A\mapsto \one_{-n-1]}\otimes
A\otimes \one_{[n+1}\in\mathcal{M}\ ,
\end{equation}
where $\one_{-n-1]}$ stands for the infinite tensor products of
$2\times 2$ identity matrices up to site $-n-1$, while $\one_{[n+1}$
stands for the infinite tensor product of infinitely many identity
matrices from site $n+1$ onwards. In this way, the local algebras
are sub-algebras of the infinite one sharing  a same identity
operator.

The simplest dynamics on quantum spin chains is given by the
right shift $\Theta[M_{[-n,n]}]=M_{[-n+1,n+1]}$:
\begin{equation}
\label{shift}
\Theta[\one_{-n-1]}\otimes A\otimes \one_{[n+1}]=\one_{-n]}\otimes
A\otimes \one_{[n+2}\ .
\end{equation}
Any state $\omega$ on $\mathcal{M}$ is a positive, normalized linear
functional whose restrictions to the local sub-algebras are density
matrices $\rho_{[-n,n]}$, namely positive matrices in
$M_{[-n,n]}(\com)$ such that ${\rm Tr}_{[-n,n]}\rho_{[-n,n]}=1$:
\begin{equation}
\label{states} M_{[-n,n]}\ni A\mapsto \omega(A)={\rm
Tr}_{[-n,n]}\Big(\rho_{[-n,n]}\,A\Big)\ .
\end{equation}
The degree of mixedness of such density matrices is measured by the
von Neumann entropy
\begin{equation}
\label{vNent} S(\rho_{[-n,n]})=-{\rm
Tr}_{[-n,n]}(\rho_{[-n,n]}\log\rho_{[-n,n]})=-\sum_jr^{(n)}_j\log
r^{(n)}_j\ ,
\end{equation}
where $0\leq r^{(n)}_j\leq 1$, $\sum_jr^{(n)}_j=1$, are the
eigenvalues of $\rho_{[-n,n]}$. Notice that the von Neumann entropy
is nothing but the Shannon entropy of the spectrum of
$\rho_{[-n,n]}$ which indeed amounts to a discrete probability
distribution.

In the above expressions ${\rm Tr}_{[-n,n]}$ is the trace
computed with respect to any orthonormal basis of the Hilbert space
$\h_{[-n,n]}=(\com^2)^{\otimes 2n+1}$ onto which $A$ linearly acts.
Let $\vert i\rangle\in\com^2$, $i=0,1$, be such a basis in $\com^2$; then, a natural orthonormal basis in
$\h_{[-n,n]}$ will consist of tensor products of single spin
orthonormal vectors:
\begin{equation}
\label{ONBn} \vert\bi_{[-n,n]}\rangle=\bigotimes_{j=-n}^n\vert
i_j\rangle=\vert i_{-n}i_{-n+1}\cdots i_n\rangle\ ,
\end{equation}
namely its elements are indexed by binary strings
$\bi_{[-n,n]}\in\Omega_2^{(2n+1)}$. By going to the limit of an
infinite chain, a corresponding representation Hilbert space is
generated by orthonormal vectors, again denoted by
$\ket{\bi_{[-n,n]}}$, where $n$ arbitrarily varies and every
$\bi_{[-n,n]}$ is now a binary bi-infinite sequence in the set $\Omega_2^\z$ of such sequences where all
$i_k\notin[-n,n]$ are chosen equal to $0$. We shall denote by $\bi$
such binary strings, $\bo=(\Omega_2^\z)^*$ the set of binary strings of any
length and by $\ket{\bi}$ the
corresponding orthonormal vectors which form the so-called standard
basis of $\h$.

From~(\ref{states}), a compatibility relation immediately follows;
namely, for all $A\in M_{2^n}(\com)$, setting $\rho^{(n)}:=\rho_{[0,n-1]}$,
$$
\omega(A\otimes \one_{n})={\rm Tr}_{[0,n]}\Big(\rho^{(n+1)}\,A\otimes
\one_n\Big)=\omega(A)={\rm Tr}_{[0,n-1]}\Big(\rho^{(n)}\,A\Big)\ ,
$$
so that ${\rm Tr}_{n}\rho^{(n+1)}=\rho^{(n)}$.
On the other hand, if, for all $A\in M_{2^n}(\com)$\ ,
$$
\omega(\one_{0}\otimes A)={\rm Tr}_{[0,n]}\Big(\rho^{(n+1)}\one_{0}
\otimes A\Big)=\omega(A)={\rm Tr}_{[0,n-1]}\Big(\rho^{(n)}\,A\Big)\ ,
$$
that is if $\omega$ is a translationally invariant state, then
$\rho^{(n)}={\rm Tr}_{0}\rho^{(n+1)}_{[0,n]}$ for all $n\in\n$.

To any translationally invariant state $\omega$ on a quantum spin
chain there remains associated a well-defined von Neumann entropy
rate (see for instance \cite{Benattib}):
\begin{equation}
\label{vNentrate} s(\omega)=\lim_{n\to+\infty}\frac{1}{n}
S(\rho^{(n)})
\ .
\end{equation}

\section{Semi-computable semi-density matrices}

The concept of semi-computable semi-density matrices on infinite
dimensional separable Hilbert spaces is introduced in \cite{FSA}.
We will stick to the concrete case of quantum spin chains and start by noticing
that the vectors in~\eqref{ONBn} constitute an effectively constructed orthonormal basis
in the natural Hilbert space $\mathbb{H}$ associated with the chain.
\medskip

\begin{defn}
\label{algvect}
A vector state $\ket{\psi}=\sum_{\bi\in\bo}a_\bi|\bi>\in \mathbb{H}$, is
called elementary if only finitely many coefficients are non-zero algebraic numbers
and the remaining ones vanish.
A vector state $| \psi>=\sum_{\bi\in\bo}a_\bi|\bi>\in\h$ where $a_i\in\rr$,
will be termed semi-computable if there exist a computable sequence
of elementary vectors $| \psi_n> = \sum_{\bi\in \bo} a_{n,\bi}| \bi>$ and a
computable function $k: \n \rightarrow \q$, such that
$\lim_{n\rightarrow\infty} k_n = 0$, and for each $n, | a_\bi-
a_{n,\bi}| \leq k_n$.
\end{defn}

\begin{lem}\label{density matrix natural}
Elementary states are identified by natural numbers.
\end{lem}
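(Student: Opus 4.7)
The plan is to exhibit an effective injection from the set of elementary vector states into $\mathbb{N}$, which is what ``identified by natural numbers'' means in the computability-theoretic setting already invoked in the paper.

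First, I would extend the encoding $\tau$ of~\eqref{enc} to provide a computable bijection between $\bo$ and $\mathbb{N}$. Each bi-infinite string $\mathbf{i}\in\bo=(\Omega_2^\z)^*$ has finite support, so it is equivalent to the data of a finite binary block together with an integer specifying where that block begins; pairing the integer with the image of the block under $\tau$ via Cantor's pairing function yields a computable bijection $\bo\leftrightarrow\mathbb{N}$. In particular the standard basis vectors $\ket{\mathbf{i}}$ are effectively enumerated.

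Second, I would invoke the standard fact that the set of algebraic numbers is effectively enumerable: each algebraic number is specified by its minimal polynomial in $\mathbb{Z}[x]$ together with an index selecting a particular root under some canonical ordering. Both pieces of data can be encoded as natural numbers, so there is a computable enumeration $\alpha:\mathbb{N}\to\overline{\mathbb{Q}}$; moreover arithmetic and equality of algebraic numbers are decidable, which is essential for the canonicalisation step below. Combining these two enumerations, an elementary state $\ket{\psi}=\sum_{\mathbf{i}\in S}a_{\mathbf{i}}\ket{\mathbf{i}}$ is fully determined by a finite set $S\subset\bo$ and, for each $\mathbf{i}\in S$, a nonzero algebraic coefficient $a_{\mathbf{i}}$. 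Canonicalising by listing the strings in increasing $\tau$-order and discarding zero coefficients, one obtains a unique finite list of pairs $\bigl((n_1,m_1),\ldots,(n_k,m_k)\bigr)$ of natural numbers, which can itself be encoded as a single natural number, for instance via $\prod_{j=1}^{k}p_j^{\langle n_j,m_j\rangle}$ with $p_j$ the $j$-th prime and $\langle\cdot,\cdot\rangle$ Cantor's pairing function. This yields an effective injection from elementary states into $\mathbb{N}$, which is the claimed identification.

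The main obstacle is ensuring that the canonicalisation step is genuinely effective: uniqueness of the encoding requires that one can computably decide whether a candidate coefficient is zero and whether two coefficients coincide, even when they are presented in different algebraic forms. Both reduce to the decidability of equality for algebraic numbers given through their minimal polynomials and root-separation data, and this is the only non-trivial computability input the argument needs.
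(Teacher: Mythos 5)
Your proposal is correct and takes essentially the same route as the paper: both identify an elementary state by a prime-power (or pairing-function) encoding of its finite support together with each algebraic coefficient represented as an integer polynomial plus an index selecting one root under a canonical ordering of the roots. Your explicit canonicalisation step, resting on the decidability of equality of algebraic numbers, tightens a point the paper's proof leaves implicit (its encoding $w$ depends on the chosen polynomial, so injectivity on states requires fixing, e.g., the minimal polynomial or the least code), but this does not change the underlying argument.
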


\begin{proof}
A complex number $z$ is an algebraic number if it is a zero of the polynomial
$p(z)=x_0 z^{n}+x_1 z^{n-1}+\ldots+x_{n-1} z+x_n$ where $x_0,\ldots,x_n$ are (not all zero) integers.

The roots of any polynomial $p(z)=0$ can be arranged in the following lexicographical order: let $z_1=x_1+ i y_1$ and  $z_2=x_1+i y_2$; then, we say that $z_1$ preceeds $z_2$
if $x_1< x_2$ or if, when $x_1=x_2$, $y_1< y_2$. In this way, given the $n+1$-tuple of complex numbers $(z_0, \ldots z_n)$, one can introduce the quantities
$$
w(z_i)=2^n 3^{x'_0} \cdots p_{n+2}^{x'_n}p_{n+3}^i\ ,
$$
where $x'_j=f(x_j)$, with $f:\zz\to\n$ a one-to-one and surjective function, while $2,3,5,\ldots,p_{n+3}$ are prime numbers, listed in increasing order.

Let $\ket{\psi}=\sum_{\bi\in\bo}a_\bi|\bi>\in \mathbb{H}$ be an elementary state, with is algebraic numbers $a_{\bi}$. The enumeration of the elementary vectors can then be accomplished by associating them with the integer numbers
$$
W(\psi)= 2^n 3^{w(a_0)} \cdots p_{n+2}^{w(a_n)}\ ,
$$
where $n$ is the smallest  number such that  $a_{\bi}= 0$, for $\bi\notin [-n, n]$, when representing the bilateral sequence as an integer.
\end{proof}

\begin{example}
Consider the Bell state $\displaystyle |\psi>=\frac{|00>-|11>}{\sqrt{2}}$ with $\vert0\rangle$ and $\vert 1\rangle$, where $\vert 0\rangle$ and $\vert 1\rangle$ are the eigenvctors of $\sigma_3$ and constitute  an effectively constructible orthonormal basis in $\mathbb{C}^2$.
It is an elementary state as the coefficients $a_{00}=-a_{11}=1/\sqrt{2}$ are the roots of the equation $2z^2-1=0$.
As a function $f$ in the Lemma, one can use the following representation of rational numbers by integers,
$$
\frac{p}{q}\mapsto <0, <p, q>>\ ,\qquad -\frac{p}{q}\mapsto <1, <p, q>>\ ,
$$
where $p, q\geq 0$ and $<p, q>=2^p (2 q+1)-1$. It follows that the algebraic equation coefficients $x_0=2$ and $x_2=-1$ can be represented by
$$
x'_0=f(2)=<0, <2, 1>>=22\ ,\quad x'_2=f(-1)=<1, <1, 1>>=21\ ,
$$
whence
\begin{eqnarray*}
w\Big(\frac{1}{\sqrt{2}}\Big)&=&2\times 3^{22}\times 5^{21}\ ,\quad w\Big(-\frac{1}{\sqrt{2}}\Big)=2\times 3^{10}\times 5^{45}\times 7\\
W(\psi)&=&
2\times 3^{2\times 3^{22}\times 5^{22}}\times 5^{2\times 3^{22}\times 5^{21}\times 7}\ .
\end{eqnarray*}
\end{example}
\bigskip

\begin{defn}
\label{elemop}
A linear operator $T:\h_{[-n,n]}\rightarrow \h_{[-n,n]}$, will be called elementary if
the real and imaginary parts of all of its matrix entries are
rational numbers.
\end{defn}
\medskip

\begin{defn}
\label{semidm}
A linear operator $T:\h\rightarrow \h$, is a semi-density matrix
if  $T$ is positive and $0\leq {\rm Tr}(T)\leq 1$.
\end{defn}
\medskip

\begin{defn}
\label{quasi-inc-def}
Let  $n_1,n_2\in\n$ and $n_1\leq n_2$. Let
$T_j:\h_{[-n_j,n_j]}\rightarrow \h_{[-n_j,n_j]}$, $j=1,2$, be two
linear operators: $T_2$ will be said to be \textit{quasi-greater} than $T_1$,
$T_1\leq_q T_2$, if $P_{n_1}\,T_2\,P_{n_1}-T_1\geq0$, where
$P_n$ is the canonical projection from the infinite dimensional Hilbert space
$\mathbb{H}$ onto the finite dimensional subspace $\mathbb{H}_{[-n, n]}$, namely
$\mathbb{H}_{[-n, n]}=P_n\, \mathbb{H}\, P_n$.
A sequence of linear operators $T_n:\h_{[-n,n]}\rightarrow
\h_{[-n,n]}$ will be called quasi-increasing if for all $n\geq 1$,
$T_{n+1}\geq_q T_n$.
\end{defn}
\medskip

\begin{lem}
\label{lemaux}
Quasi-increasing sequences of positive operators $\{T_n\}_{n\in\n}$ of trace smaller than $1$ converge in trace norm to positive operators.
\end{lem}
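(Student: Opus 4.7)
I would identify each $T_n$ with its extension by zero to $\h$; equivalently, with $P_n T_n P_n\in\mathcal{B}(\h)$. The plan is then to show that $\{T_n\}$ is Cauchy in the trace norm $\|\cdot\|_1$ and to bootstrap the operator convergence from the convergence of the scalar sequence $\{\mathrm{Tr}(T_n)\}$.

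First I would verify that $\{\mathrm{Tr}(T_n)\}$ is monotonically increasing and bounded by $1$. The quasi-increasing relation $P_n T_{n+1} P_n \geq T_n$ gives
\begin{equation*}
\mathrm{Tr}(T_n)\leq \mathrm{Tr}(P_n T_{n+1} P_n)=\mathrm{Tr}(P_n T_{n+1})\leq \mathrm{Tr}(T_{n+1})\leq 1,
\end{equation*}
so the scalar sequence converges and is in particular Cauchy in $\rr$. Iterating the quasi-increasing condition also yields $P_m T_n P_m \geq T_m$ for all $m\leq n$.

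For $m\leq n$ I would split
\begin{equation*}
T_n - T_m = \bigl(T_n - P_m T_n P_m\bigr) + \bigl(P_m T_n P_m - T_m\bigr).
\end{equation*}
The second summand is positive by the iterated quasi-increasing condition, so its trace norm equals its trace, bounded above by $\mathrm{Tr}(T_n)-\mathrm{Tr}(T_m)$. The first summand is the technical heart of the argument: writing $Q_m:=\one - P_m$, it expands as $P_m T_n Q_m + Q_m T_n P_m + Q_m T_n Q_m$, and positivity of $T_n$ together with Cauchy-Schwarz for the Hilbert-Schmidt inner product give
\begin{equation*}
\|P_m T_n Q_m\|_1 = \bigl\|P_m T_n^{1/2}\cdot T_n^{1/2} Q_m\bigr\|_1 \leq \sqrt{\mathrm{Tr}(P_m T_n P_m)}\sqrt{\mathrm{Tr}(Q_m T_n Q_m)}.
\end{equation*}
Since $\mathrm{Tr}(Q_m T_n Q_m) = \mathrm{Tr}(T_n) - \mathrm{Tr}(P_m T_n P_m)\leq \mathrm{Tr}(T_n) - \mathrm{Tr}(T_m)$, every piece is ultimately dominated by the tail of the scalar Cauchy sequence $\{\mathrm{Tr}(T_n)\}$, so $\|T_n - T_m\|_1\to 0$ as $m,n\to\infty$. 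Completeness of the trace-class ideal then delivers a limit $T$, and positivity of $T$ follows because trace-norm convergence implies weak-operator convergence, under which the positive cone is closed.

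The main obstacle is precisely the first summand $T_n - P_m T_n P_m$: it is not positive in general, so the convenient identity \emph{trace norm equals trace} is unavailable. One must therefore exploit positivity of the full operator $T_n$ indirectly, via the factorization $T_n = T_n^{1/2}\cdot T_n^{1/2}$ and the Hilbert-Schmidt Cauchy-Schwarz inequality, in order to control the off-diagonal blocks $P_m T_n Q_m$ and $Q_m T_n P_m$ in trace norm by the manifestly small diagonal quantity $\mathrm{Tr}(Q_m T_n Q_m)$.
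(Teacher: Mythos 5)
Your proof is correct, and it is genuinely more self-contained than the paper's. The paper's proof of this lemma does not carry out any Cauchy estimate at all: it simply cites \cite{FSA} for the fact that quasi-increasing monotonicity plus the bounded increasing scalar sequence $\mathrm{Tr}(T_n)\leq 1$ forces trace-norm convergence, and then obtains positivity of the limit $T$ from the chain
\[
\mathrm{Tr}(T)=\lim_{n\to\infty}\mathrm{Tr}(T_n)=\lim_{n\to\infty}\|T_n\|_{tr}=\|T\|_{tr}\ ,
\]
i.e.\ from the fact that a self-adjoint trace-class operator whose trace equals its trace norm can have no negative eigenvalues. Your argument supplies exactly the quantitative content the paper outsources: the iterated monotonicity $P_m T_n P_m\geq T_m$ for $m\leq n$ (which you correctly need and which follows from the nesting $P_mP_n=P_m$ together with order-preservation under conjugation by $P_m$), the splitting $T_n-T_m=(T_n-P_mT_nP_m)+(P_mT_nP_m-T_m)$ with the positive summands controlled by the scalar tail $\mathrm{Tr}(T_n)-\mathrm{Tr}(T_m)$, and --- the key point you rightly flag --- the off-diagonal blocks $P_mT_nQ_m$ handled through the factorization $T_n=T_n^{1/2}\,T_n^{1/2}$ and the Hilbert--Schmidt H\"older inequality, giving a bound of order $\sqrt{\mathrm{Tr}(T_n)-\mathrm{Tr}(T_m)}$ (the square root, not the first power, of the tail, since the other factor $\sqrt{\mathrm{Tr}(P_mT_nP_m)}\leq 1$; this is why naive positivity arguments on $T_n-P_mT_nP_m$ fail and your detour is necessary). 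Your positivity argument also differs from the paper's: you use closedness of the positive cone under weak-operator convergence (which trace-norm, hence operator-norm, convergence implies), whereas the paper uses the trace-equals-trace-norm identity above. Both are sound; yours is more robust and makes the lemma genuinely self-contained, while the paper's positivity step is a slicker one-liner once the (cited) convergence is granted.
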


\proof
As already shown in~\cite{FSA}, since the sequence $T_n$ is quasi-increasing, ${\rm Tr}(T_n)$ is an increasing sequence and since for every $n$, ${\rm Tr}(T_n)\leq 1$,
the sequence converges in trace-norm, $\|X\|_{tr}={\rm
Tr}\sqrt{X^\dag X}$ to an operator $T$ in the Banach space $T(\h)$
of trace-class operators on $\h$, moreover
$$
{\rm Tr}(T)=\lim_{n\to+\infty}{\rm
Tr}(T_n)=\lim_{n\to+\infty}\|T_n\|_{tr}=\|T\|_{tr}\leq 1\ .
$$
Therefore, $T$ must be positive.
\qed

\begin{rem}
\label{remaux}
{\rm Since the semi-density matrices $\rho^{(n)}_m$ and $\rho^{(n)}$ are compact positive operators,
their eigenvalues  $\{\lambda_k(\rho^{(n)}_m)\}_k$ and $\{\lambda_k(\rho^{(n)})\}_k$ can be listed in decreasing order.
Then,~\cite{E.Kowalski}
$$
\lim_{m\to\infty}\lambda_k(\rho^{(n)}_m) = \lambda_k(\rho^{(n)})\ .
$$
Therefore, the fact that eigenvalues and eigenvectors of elementary matrices are computable, the eigenvalues and eigenvectors of $\rho^{(n)}$ are semi-computable.
Moreover, each elementary density matrix $\rho^{(n)}_m$ is identified by a natural number $a_{m n}$ (see \eqref{density matrix natural}).}
\hfill\qed
\end{rem}

The previous result makes meaningful the following

\begin{defn}
\label{semicomp}
A linear operator $T$ on $\h$ is a
semi-computable semi-density matrix, if there exists a computable
\textit{quasi-increasing} sequence of elementary semi-density matrices
$\{T_n\}_n$ such that the trace-norm of their difference satisfies
$\lim_{n\rightarrow\infty}\|T-T_n\|_{tr}=0$.
\end{defn}
\medskip

Furthermore, in~\cite{FSA} it has been shown that there exists  a universal
semi-computable semi-density matrix $\hat{\mu}$, in the following
sense.

\begin{thm}
\label{univdem}
For any semi-computable semi-density matrix $\rho$ there exists
a constant number $c_\rho>0$ such that $c_\rho\,\rho\leq\hat{\mu}$.
\end{thm}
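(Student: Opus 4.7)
The strategy is to mimic the Zvonkin--Levin construction of a universal lower semi-computable semi-measure, lifted to the operator setting developed in Definitions~\ref{elemop}--\ref{semicomp} and Lemma~\ref{lemaux}. The construction proceeds in three steps: an effective enumeration of semi-computable semi-density matrices, a weighted mixture, and verification of universality.

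\medskip

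\emph{Step 1 (effective enumeration).} By Lemma~\ref{density matrix natural}, extended to elementary operators via Definition~\ref{elemop} (whose rational real and imaginary matrix entries are finitely presentable), every elementary semi-density matrix on some $\mathbb{H}_{[-n,n]}$ carries a canonical natural-number code. By Definition~\ref{semicomp}, every semi-computable semi-density matrix is therefore specified by a computable function $\mathbb{N}\to\mathbb{N}$ listing such codes. Enumerating the partial recursive functions as $\{\varphi_e\}_{e\in\mathbb{N}}$, I construct a total-recursive repair $(e,n)\mapsto T_n^{(e)}$ by standard dovetailing with a growing timeout: at stage $n$, attempt to decode $\varphi_e(n)$ as an elementary operator $S_n$ on $\mathbb{H}_{[-n,n]}$; accept $T_n^{(e)}:=S_n$ only if the decidable tests $S_n\geq 0$, $\mathrm{Tr}(S_n)\leq 1$, and $T_{n-1}^{(e)}\leq_q S_n$ all succeed; otherwise let $T_n^{(e)}$ be the trivial extension of $T_{n-1}^{(e)}$ by zero to $\mathbb{H}_{[-n,n]}$ (with $T_0^{(e)}:=0$). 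These tests are decidable because all entries are rational, and the fallback preserves the quasi-order by construction. By Lemma~\ref{lemaux} the limit $\rho^{(e)}:=\lim_n T_n^{(e)}$ exists in trace norm and is a semi-computable semi-density matrix, and every semi-computable semi-density matrix coincides with some $\rho^{(e)}$.

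\medskip

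\emph{Step 2 (weighted mixture and semi-computability of $\hat{\mu}$).} Fix positive rational weights $w_e$ with $\sum_{e\geq 1} w_e\leq 1$, say $w_e:=2^{-e-1}$, and set
$$
\hat{\mu}\;:=\;\sum_{e\geq 1} w_e\,\rho^{(e)},\qquad
\hat{\mu}_N\;:=\;\sum_{e=1}^{N} w_e\,T_N^{(e)}.
$$
Each $\hat{\mu}_N$ is a rational combination of elementary operators, hence itself elementary on $\mathbb{H}_{[-N,N]}$, and is a computable function of $N$. Quasi-increasingness $\hat{\mu}_N\leq_q\hat{\mu}_{N+1}$ follows because $P_N\,T_{N+1}^{(e)}P_N\geq T_N^{(e)}$ for every $e\leq N$, while the extra term $w_{N+1} P_N T_{N+1}^{(N+1)} P_N$ is positive. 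The splitting
$$
\|\hat{\mu}-\hat{\mu}_N\|_{tr}\;\leq\;\sum_{e=1}^{N} w_e\,\|\rho^{(e)}-T_N^{(e)}\|_{tr}\;+\;\sum_{e>N} w_e,
$$
together with the pointwise convergence $\|\rho^{(e)}-T_N^{(e)}\|_{tr}\to 0$ and summability of $w_e$, shows $\hat{\mu}_N\to\hat{\mu}$ in trace norm, so $\hat{\mu}$ is itself a semi-computable semi-density matrix in the sense of Definition~\ref{semicomp}.

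\medskip

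\emph{Step 3 (universality).} Given any semi-computable semi-density matrix $\rho$, select via Step~1 an index $e_\rho$ with $\rho=\rho^{(e_\rho)}$ and set $c_\rho:=w_{e_\rho}>0$. Since every $\rho^{(e)}\geq 0$,
$$
\hat{\mu}-c_\rho\,\rho\;=\;\sum_{e\neq e_\rho} w_e\,\rho^{(e)}\;\geq\;0,
$$
which is precisely $c_\rho\,\rho\leq\hat{\mu}$.

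\medskip

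\emph{Main obstacle.} The critical step is Step~1: producing a single total-recursive repair that is simultaneously \emph{exhaustive} (every genuine semi-computable semi-density matrix appears as some $\rho^{(e)}$) and \emph{correct} (positivity, trace bound, and the projection-based quasi-order are enforced even when the raw program $\varphi_e$ is non-halting or otherwise pathological). Because $\leq_q$ is not ordinary operator order but involves the compressions by the nested projections $P_n$, the fallback rule must be chosen so that extending the previous stage by zero still satisfies the quasi-order; once this is handled, Steps~2--3 reduce to Lemma~\ref{lemaux} and routine bookkeeping with the summable weights $w_e$.
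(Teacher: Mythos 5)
The first thing to note is that the paper itself contains no proof of Theorem~\ref{univdem}: it is quoted verbatim from~\cite{FSA}, so your attempt can only be compared with the Levin-style construction used there. Your overall architecture (effective enumeration, weighted mixture $\hat\mu=\sum_e w_e\rho^{(e)}$, universality with $c_\rho=w_{e_\rho}$) is indeed the intended strategy, and your Steps~2 and~3 are correct routine bookkeeping. But Step~1, which you yourself identify as carrying all the weight, has a genuine gap: the zero-extension fallback is incompatible with the quasi-order $\leq_q$ in the forward direction. Concretely, take $T_{n-1}=\bigl(\tfrac12\bigr)$ supported on a one-dimensional subspace and $T_n=\begin{pmatrix} 1/2 & 1/4\\ 1/4 & 1/4\end{pmatrix}$ (all other entries zero). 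These are elementary, positive, of trace $\leq 1$, and $T_{n-1}\leq_q T_n$ with equality after compression. If a timeout forces the fallback $T^{(e)}_n:=T_{n-1}\oplus 0$, then the perfectly genuine continuation $T_{n+1}:=T_n\oplus 0$ fails your decidable test, because $P_n T_{n+1}P_n-(T_{n-1}\oplus 0)=\begin{pmatrix}0&1/4\\ 1/4&1/4\end{pmatrix}$ has negative determinant; the same rejection recurs at all later stages, so one timeout permanently derails the stream and $\rho^{(e)}\neq\rho$. Nor can a cleverer filler save the scheme: if $X$ must satisfy $T_k\leq_q X$ and $X\leq_q T_{k+1}$ for \emph{every} admissible continuation (in particular for $T_{k+1}=T_k\oplus 0$), a block-positivity argument forces $X=T_k\oplus 0$, which the example above already rules out. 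Since a fixed computable timeout is exceeded infinitely often by some total listing programs (running times of total functions cannot be uniformly sped up), and your procedure never retries $\varphi_e(n)$, exhaustiveness --- which Step~3 consumes in the exact form $\rho=\rho^{(e_\rho)}$ --- is not established. Your parenthetical claim that ``the fallback preserves the quasi-order by construction'' is true only of the single inserted step, not of the acceptance of subsequent genuine terms.

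The repair can be salvaged, but by a different mechanism than padding with zero-extensions. Compressing the raw quasi-order relations gives $P_m T_{t'}P_m\geq P_m T_t P_m$ whenever $m\leq t\leq t'$, so the \emph{lagged} sequence $X_m:=P_m T_{t(m)}P_m$ with $t(m)\geq m$ nondecreasing is again elementary (principal sub-blocks of rational matrices), quasi-increasing, and converges in trace norm to the same limit. The correct discipline is therefore to impose no timeout at all: to emit the $m$-th repaired term, wait until the raw stream has overtaken the window $[-m,m]$ and output the true compression; a program enters the mixture $\hat\mu_N$ only once it has caught up with the current window (a jump from a zero contribution upward is harmless for quasi-increasingness), and a program that diverges simply contributes the limit of its already accepted compressions, which exists by Lemma~\ref{lemaux}. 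With that replacement Steps~2 and~3 go through unchanged; as written, however, your Step~1 does not deliver what your Step~3 uses.
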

\medskip

The universal semi-computable semi-density matrix $\hat{\mu}$ can always be spectralized as
\begin{equation}
\label{usdm}
\hat{\mu}=\sum_{\bi\in\Omega_2}
\mu_{\bi}\vert\mu_{\bi}\rangle\langle\mu_{\bi}\vert\ ,
\end{equation}
with only a countable number of eigenvalues $\mu_\bi\neq 0$.

Using the notion of universal semi-density matrix, in analogy with the classical relation between the Kolmogorov complexity and the logarithm of a universal semi-computable semi-measure, one can introduce the following \textit{quantum complexity}, which we shall refer to as Gacs complexity in the following.
\medskip

\begin{defn}
\label{Gacscomp}
The Gacs complexity of a semi-computable semi-density matrix
$\rho\in B(\mathbb{H})$ is defined by
\begin{equation}
\label{gacsent}
\oh(\rho)=-{\rm Tr}(\rho \log\hat{\mu})\ .
\end{equation}
\end{defn}
\bigskip

\begin{rem}
\label{gacsrem}
In~\cite{Gacs} another possible quantum complexity was proposed,
$$
\lh(\rho)=-\log{\rm Tr}(\rho \hat{\mu})\ .
$$
Notice that $\oh(\rho)$ is more inherently quantum than $\lh(\rho)$ since $\oh(\rho)$ amounts to the mean value of the complexity operator $\kappa(\hat\mu):=-\log\hat\mu$.
Also, by the convexity of $-\log x$, it cannot be lower than $\lh(\rho)$~\cite{Gacs}.
\qed
\end{rem}

\section{Gacs Complexity: Classical dynamical Systems}

\begin{defn}
Let $(\mathcal{X}, T, \nu)$ be a dynamical system and $\mathcal{P}$ be a
finite measurable partition of $\mathcal{X}$. The associated symbolic model $(\Omega_k, T_\sigma, \nu_p)$  is called a \textit{semi-computable
symbolic model}  if $\nu_p$ as a function from
$\Omega_k$ into $\rr$ is a semi-computable probability measure.
\end{defn}
\medskip

\begin{rem}
\label{reminf}
Notice  that since $\sum_{\ii\in\Omega^{(n)}_k} \nu_\mathcal{P}(\ii)=1$, $\nu_\mathcal{P}$ is computable, but not a measure on $\Omega_k$; indeed,
$$
\sum_{\bi\in\Omega^*_k}\nu_\mathcal{P}(\bi)=\sum_{n=1}^\infty
1=\infty\ .
$$
\qed
\end{rem}

We will follow Gacs approach to quantum algorithmic complexity and adapt it to a classical framework in order to define the Gacs complexity rate for trajectories of classical dynamical systems. The purpose is to obtain in this way a proof of the classical Brudno's theorem by means of semi-computability techniques.

We first define the Gacs algorithmic complexity for a symbolic model
$(\Omega_k,\theta,\nu_\mathcal{P})$ of any dynamical system $(\mathcal{X},T,\nu)$, the symbolic model being constructed upon assigning a finite, measurable partition $\mathcal{P}$ with $k$ atoms.

In the classical case, the semi-computable semi-density matrices  are replaced by semi-computable semi-measures and universal semi-density matrices by universal semi-measures. Therefore, we can adapt the quantum definition and introduce the
notion of Gacs complexity of the semi-computable probability measure $\nu_{\mathcal{P}}$ as follows
\begin{equation}
G_{\nu_\mathcal{P}}(\mathcal{P}^{(n)}):=-\sum_{\ii\in\Omega_k^{(n)}}
\nu_\mathcal{P}(\ii)\log{\mu(\textbf{i}^{(n)})}\ ,
\end{equation}
where   $\mu$ is a universal semi-computable semi-measure  on
$\Omega_k$ and $\mathcal{P}^{(n)}$ is the partition with atoms the intersections
in~\eqref{refpart}.

The interpretation of the above definition is that $-\log\mu(\ii)$ represents the universal information content of the string $\ii$ so that its average with respect to
$\nu_{\mathcal{P}}$ measures the algorithmic randomness content of $\nu_{\mathcal{P}}$.

\begin{rem}
\label{remm}
Notice that, because of the universality of $\mu$, $\mu(\textbf{i}^{(n)})>0$  for all $\ii\in\Omega_k^{(n)}$. Indeed, let us define the function $f_{\ii}:\Omega_k\to \rr$ such that $f_{\ii}(\jj)=1$ for $\jj=\ii$, otherwise $0$; by construction, $f_{\ii}$ is semi-computable so that there exists a constant $c_f>0$ such that $c_f f_\ii(\jj) \leq \mu(\jj)$, for any $\jj\in\Omega_k$. Therefore, $\mu(\ii)>0$. Therefore, $G_\nu(\mathcal{P}^{(n)})$ is well defined.
\qed
\end{rem}

\begin{defn}
Let $(\mathcal{X}, T, \nu)$ be a  dynamical system.
Given a partition $\mathcal{P}$ such that $\nu_{\mathcal{P}}$ is  computable, one associates to it a Gacs complexity rate naturally given by
\begin{equation}
\label{eq:classical gacs}
g_\nu(T, \mathcal{P})=\limsup_{n\rightarrow\infty}\frac{1}{n}G_\nu(\mathcal{P}^{(n)})\ ,
\end{equation}
and also a Gacs complexity rate of $(\mathcal{X}, T, \nu)$ defined as
\begin{equation}
\label{defclassGacs}
g_\nu(T):= \sup_{\mathcal{P}} g_\nu(\mathcal{P})\ ,
\end{equation}
\end{defn}

\begin{rem}
Unlike for the Kolmogorov-Sinai entropy (see~\eqref{KS}), where the supremum is taken over all possible finite, measurable partitions, in order to be able to use semi-computability techniques, we need restrict the supremum in \eqref{defclassGacs} to be computed over finite measurable partitions $\mathcal{P}$ such that the corresponding measures $\nu_{\mathcal{P}}$ are semi-computable.
\qed
\end{rem}

Now, the natural question is whether there does exist a Brudno-like relation similar to \eqref{fullBrudno} between the Gacs algorithmic complexity rate and the KS-entropy in ergodic classical dynamical systems.
We start with an inequality involving the Gacs complexity rate and the Kolmogorov-Sinai entropy that is independent from ergodicity.

\begin{lem}
\label{thm:class}
Given a dynamical system $(\mathcal{X}, T, \nu)$,
\begin{equation}
g_\nu(T)\leq h_\nu^{KS}(T)\ .
\end{equation}
\end{lem}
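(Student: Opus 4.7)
My plan is to bound the Gacs complexity $G_\nu(\mathcal{P}^{(n)})$ from above by the Shannon entropy $H_\nu(\mathcal{P}^{(n)})$, up to a correction of order $\log n$, by comparing $\nu_\mathcal{P}$ against the universal semi-measure $\mu$ through a suitably normalized auxiliary semi-measure. Dividing by $n$ and letting $n\to\infty$ will then give the result for each partition $\mathcal{P}$, and hence for the supremum.

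Concretely, I would first build a semi-computable semi-measure $\tilde\nu$ on $\Omega_k^*$ from $\nu_\mathcal{P}$. The naive definition fails because $\sum_{\mathbf{i}\in\Omega_k^*}\nu_\mathcal{P}(\mathbf{i})=\infty$ (see Remark), so I weight by a computable factor depending only on the length. Pick a computable sequence of positive rationals $\{a_n\}_{n\in\mathbb{N}}$ with $\sum_n a_n\leq 1$ and $\log(1/a_n)=O(\log n)$; for instance $a_n=1/(n(n+1))$. Define
\begin{equation*}
\tilde\nu(\mathbf{i})=a_{|\mathbf{i}|}\,\nu_\mathcal{P}(\mathbf{i}),\qquad \mathbf{i}\in\Omega_k^*.
\end{equation*}
Since $\nu_\mathcal{P}$ is semi-computable (by the assumption implicit in the sup defining $g_\nu(T)$) and $a_n$ is computable, $\tilde\nu$ is semi-computable; moreover $\sum_{\mathbf{i}\in\Omega_k^*}\tilde\nu(\mathbf{i})=\sum_n a_n\sum_{\mathbf{i}\in\Omega_k^{(n)}}\nu_\mathcal{P}(\mathbf{i})=\sum_n a_n\leq 1$, so $\tilde\nu$ is a semi-computable semi-measure on $\Omega_k^*$.

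Next, by the universality of $\mu$ (Definition \ref{defAC}, transported to $\Omega_k^*$ via the bijection \eqref{enc}), there is a constant $c>0$ such that $\mu(\mathbf{i})\geq c\,\tilde\nu(\mathbf{i})$ for every $\mathbf{i}\in\Omega_k^*$. Taking $-\log$ and specializing to strings of length $n$ yields
\begin{equation*}
-\log\mu(\mathbf{i}^{(n)})\leq -\log\nu_\mathcal{P}(\mathbf{i}^{(n)})+\log\frac{1}{a_n}-\log c.
\end{equation*}
Averaging against $\nu_\mathcal{P}$ over $\mathbf{i}^{(n)}\in\Omega_k^{(n)}$ (this is legitimate because $\mu(\mathbf{i}^{(n)})>0$ by Remark \ref{remm}) gives
\begin{equation*}
G_\nu(\mathcal{P}^{(n)})\leq H_\nu(\mathcal{P}^{(n)})+\log\frac{1}{a_n}-\log c = H_\nu(\mathcal{P}^{(n)})+O(\log n).
\end{equation*}

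Finally, dividing by $n$ and taking $\limsup_{n\to\infty}$, the $O(\log n)$ correction vanishes and the $\limsup$ of $\frac{1}{n}H_\nu(\mathcal{P}^{(n)})$ coincides with its limit $h_\nu^{KS}(T,\mathcal{P})$ by \eqref{KSp}, so $g_\nu(T,\mathcal{P})\leq h_\nu^{KS}(T,\mathcal{P})$. Taking the supremum over all admissible partitions (with $\nu_\mathcal{P}$ semi-computable) on both sides yields $g_\nu(T)\leq h_\nu^{KS}(T)$, since on the right one supremum is over a smaller class. The only delicate point is verifying that $\tilde\nu$ is genuinely semi-computable on $\Omega_k^*$ (rather than merely on each $\Omega_k^{(n)}$), which is why one must insert the normalizing weights $a_n$ instead of summing naively — but the choice $a_n=1/(n(n+1))$ makes this transparent and keeps the correction logarithmic.
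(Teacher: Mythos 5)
Your proof is correct and follows essentially the same route as the paper's: both repair the non-normalizability of $\nu_\mathcal{P}$ on $\Omega_k^*$ by inserting a computable length-dependent weight to obtain a semi-computable semi-measure, invoke universality of $\mu$ to get $-\log\mu(\mathbf{i}^{(n)})\leq-\log\nu_\mathcal{P}(\mathbf{i}^{(n)})+O(\log n)$, average against $\nu_\mathcal{P}$, and let the logarithmic correction vanish in the rate via \eqref{KSp} before taking the supremum over admissible partitions. The only difference is cosmetic: you weight by $a_n=1/(n(n+1))$ where the paper uses $\delta(n)=1/(n\log^2 n)$, and either choice works since $\log(1/a_n)=O(\log n)$ in both cases.
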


\begin{proof}
Let $\mathcal{P}$ be a finite measurable  partition  of $\mathcal{X}$ such that
$\nu_\mathcal{P}$ is computable.
Since $\nu_\mathcal{P}$ is not a well defined measure on $\Omega_k$ (see Remark \ref{reminf}), we consider the following semi-computable measure  $f$ on
$\Omega_k$,
$$
f(\bi)=\frac{\delta(n)\nu_\mathcal{P}(\bi)}{\sum_{n=1}^\infty \delta(n)}\ , \quad \bi\in \Omega_k\ ,
$$
where $\delta(n)=\frac{1}{n \log^2{n}}$ so that $\sum_{\bi\in\Omega_k}f(\bi)=1$.
Then,
there exists a constant $c_{\nu_\mathcal{P}}>0$, dependent  on
$\nu_\mathcal{P}$, such that for any $\bi\in\Omega_k$,
$$
c_{\nu_\mathcal{P}}\delta(n)\nu_{\mathcal{P}}(\bi)\leq\mu(\bi)\ .
$$
Thus, restricting to strings of definite length $n$,
\begin{eqnarray*}
-\sum_{\ii\in\Omega^{(n)}_k} \nu_\mathcal{P}(\ii)\log{\mu(\textbf{i}^{(n)})}&\leq&
-\sum_{\ii\in\Omega^{(n)}_k}\nu_\mathcal{P}(\ii)\log\nu_\mathcal{P}(\ii)\\
&-&\log c_{\nu_{\mathcal{P}}}-\log\delta(n)\ .
\end{eqnarray*}
Then, using \eqref{KSp}, one gets
$$
g_\nu(T,\mathcal{P})\leq h_\nu^{KS}(T, \mathcal{P})\leq h_\nu^{KS}(T)\ ,
$$
so that taking the supremum over all partitions $\mathcal{P}$ such that
$\nu_{\mathcal{P}}$ is computable yields
$g_\nu(T)\leq h_\nu^{KS}(T)$.
\end{proof}

This Lemma allows us to prove a first Brudno's like relation between complexity and
entropy rates.

\begin{prop}
\label{GB}
Let $(\mathcal{X}, T, \nu)$ be an ergodic dynamical system.
Then,
\begin{equation}
\label{GB1}
g_\nu(T)=h_\nu^{KS}(T)\ .
\end{equation}
\end{prop}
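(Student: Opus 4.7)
The plan is to complement Lemma~\ref{thm:class}, which already gives $g_\nu(T)\le h_\nu^{KS}(T)$, by proving the reverse inequality on each partition for which $\nu_{\mathcal{P}}$ is semi-computable, and then reconcile the suprema. Fix such a partition $\mathcal{P}$ and rewrite
\begin{equation*}
\frac{1}{n}\,G_{\nu_{\mathcal{P}}}(\mathcal{P}^{(n)})
=\int_{\mathcal{X}}\Bigl(-\frac{1}{n}\log\mu\bigl(\ii_{\mathcal{P}}(x)\bigr)\Bigr)\,d\nu(x),
\end{equation*}
since the $\nu_{\mathcal{P}}$-law of the length-$n$ symbolic prefix $\ii_{\mathcal{P}}(x)$ of a $\nu$-random trajectory is exactly $\nu_{\mathcal{P}}$ restricted to $\Omega_k^{(n)}$.

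The first main step combines Brudno's classical theorem with the Zvonkin--Levin identification of $-\log\mu$ and the prefix complexity up to $O(1)$. By the ergodicity hypothesis, \eqref{fullBrudno} yields
\begin{equation*}
\lim_{n\to\infty}\frac{1}{n}K\bigl(\ii_{\mathcal{P}}(x)\bigr)=h_\nu^{KS}(T,\mathcal{P})\quad \nu\text{-a.e.},
\end{equation*}
and dividing \eqref{thm:levin2} by $n$ transports this to $-\frac{1}{n}\log\mu(\ii_{\mathcal{P}}(x))\to h_\nu^{KS}(T,\mathcal{P})$ almost everywhere. Because $\mu$ is a semi-measure, $\mu(\ii)\le 1$ and the integrand above is non-negative, so Fatou's lemma applies and gives
\begin{equation*}
\liminf_{n\to\infty}\frac{1}{n}G_{\nu_{\mathcal{P}}}(\mathcal{P}^{(n)})\ge h_\nu^{KS}(T,\mathcal{P}).
\end{equation*}
Combined with the partition-wise version of Lemma~\ref{thm:class}, which bounds the $\limsup$ from above by $h_\nu^{KS}(T,\mathcal{P})$, the $\limsup$ in \eqref{eq:classical gacs} is forced to be a true limit, and $g_\nu(T,\mathcal{P})=h_\nu^{KS}(T,\mathcal{P})$ for every semi-computable partition.

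The final and most delicate step is the passage to the supremum. The equality above immediately gives
\begin{equation*}
g_\nu(T)=\sup_{\mathcal{P}:\ \nu_{\mathcal{P}}\ \text{semi-computable}} h_\nu^{KS}(T,\mathcal{P}),
\end{equation*}
and what remains is to argue that this restricted supremum still equals the unrestricted $h_\nu^{KS}(T)$. The expected obstacle is precisely this density question: given an arbitrary finite measurable partition $\mathcal{P}$ realizing $h_\nu^{KS}(T,\mathcal{P})$ close to $h_\nu^{KS}(T)$, one must approximate it by a nearby partition $\mathcal{P}'$ whose atoms have rational (or at least computable) $\nu$-measures, so that $\nu_{\mathcal{P}'}$ becomes semi-computable, and then invoke the standard entropy-continuity estimate $|h_\nu^{KS}(T,\mathcal{P})-h_\nu^{KS}(T,\mathcal{P}')|\le\varepsilon$ under small $\nu$-perturbations of the atoms. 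With this density step in place, one obtains $g_\nu(T)\ge h_\nu^{KS}(T)$, which together with Lemma~\ref{thm:class} concludes the proof of~\eqref{GB1}.
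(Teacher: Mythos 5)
Your treatment of the two per-partition bounds is essentially the paper's route, executed somewhat more carefully: the upper bound is Lemma~\ref{thm:class} verbatim, and your lower bound via Levin's relation \eqref{thm:levin2} plus Brudno is the same mechanism the paper uses, except that your Fatou argument (legitimate, since $\mu(\ii)\leq 1$ makes the integrand $-\frac{1}{n}\log\mu(\ii_{\mathcal{P}}(x))$ nonnegative) is cleaner than the paper's tacit extraction of a uniform $N_\epsilon$ from almost-everywhere convergence, which really requires an Egorov- or Fatou-type step. One caveat: Fatou needs the bound $\liminf_n -\frac{1}{n}\log\mu(\ii_{\mathcal{P}}(x))\geq h_\nu^{KS}(T,\mathcal{P})$ almost everywhere, i.e.\ the \emph{limit} form of Brudno's theorem, whereas \eqref{defci} and \eqref{Brudnoeasy} are stated with $\limsup$; the liminf inequality is true (Brudno proves an a.e.\ limit, and the paper itself re-derives precisely this direction for $-\log\mu$ by the counting argument in Proposition~\ref{propaux}), but you should invoke it in that form rather than as written.

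The genuine gap is your final step, and your proposed repair would not close it. You correctly isolate that one must show the supremum of $h_\nu^{KS}(T,\mathcal{P})$ over partitions with semi-computable $\nu_{\mathcal{P}}$ equals the unrestricted $h_\nu^{KS}(T)$, but you then only gesture at a proof, and the sketch is flawed: semi-computability of $\nu_{\mathcal{P}'}$ is a property of the entire process law $\ii\mapsto\nu\big(P'_{i_0}\cap T^{-1}(P'_{i_1})\cap\cdots\cap T^{-n+1}(P'_{i_{n-1}})\big)$ of \eqref{refpart}, uniformly over all finite strings, and this law encodes the dynamics $T$ through the iterated intersections. Perturbing atoms so that the individual measures $\nu(P'_j)$ become rational (or computable) says nothing about the computability of these joint cylinder measures, so the approximating partition $\mathcal{P}'$ need not lie in the class over which the supremum \eqref{defclassGacs} ranges, and the entropy-continuity estimate is then applied to a partition you are not entitled to use. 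It is worth noting that the paper's own proof hides the same lacuna differently: it quotes Brudno as giving $\frac{1}{n}C(\ii)\geq h_\nu^{KS}(T)-\epsilon$ $\nu_{\mathcal{P}}$-a.e.\ for an \emph{arbitrary} computable partition, whereas \eqref{Brudnoeasy} applied to the symbolic model only yields the rate $h_\nu^{KS}(T,\mathcal{P})$, and the two coincide only for, e.g., generating partitions; the existence of entropy-approximating partitions with computable symbolic law is exactly the density statement neither you nor the paper actually establishes. So your diagnosis of where the difficulty sits is more honest than the paper's presentation, but as it stands your proposal does not prove \eqref{GB1}.
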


\begin{proof}
Let $\mathcal{P}$ be a finite measurable partition such that $\nu_{\mathcal{P}}$ is computable.
By Levin's relation \eqref{thm:levin2} and inequality (\ref{re:kol}), we have
\begin{eqnarray*}
-\log{c_1}+ C(\ii)&\leq&  -\log\mu(\ii)\leq K(\ii)+\log{c}\\
&\leq& C(\ii)+2
\log{n} + \log{c_2}\ ,
\end{eqnarray*}
for all $\ii\in \Omega^{(n)}_k$,  where $c_1>0$ and $c_2>0$ are constant numbers.
Then, we may replace \eqref{eq:classical gacs} by
$$
g_\nu(T,\mathcal{P}^{(n)})=\limsup_{n\to\infty} \frac{1}{n}\sum_{\ii\in\Omega_k^{(n)}}
\nu_\mathcal{P}(\ii) C(\ii)\ .
$$
On the other hand, Brudno's theorem (see \eqref{Brudnoeasy}) ensures us that, for any
$\epsilon>0$, there is an
integer number $N_\epsilon$ such that for any $\n\ni n\geq N_\epsilon$,
$$
\frac{1}{n}C(\ii)\geq h^{KS}_\nu(T)-\epsilon\ ,
$$
for almost all sequences $\ii\in\Omega_k$ with respect to the measure
$\nu_{\mathcal{P}}$.
Therefore, by Lemma \ref{thm:class},
\begin{eqnarray*}
h_\nu^{KS}(T)&\geq& g_\nu(T)\geq g_\nu(T, \mathcal{P})
=
\limsup_{n\to\infty} \frac{1}{n}\sum_{\ii\in\Omega^{(n)}_k} \nu_\mathcal{P}(\ii)
C(\ii)\\
&\geq&
\limsup_{n\to\infty} \frac{1}{n}\sum_{\ii\in\Omega^{(n)}_k}
\nu(\ii)(h^{KS}_\nu(T)-\epsilon)
\geq
h^{KS}_\nu(T)-\epsilon\ ,
\end{eqnarray*}
whence
$h_\nu^{KS}(T)= g_\nu(T)$.
\end{proof}

The relation~\eqref{GB1} is a Brudno-like relation; we would like now to derive from it the full Brudno's result as stated in~\eqref{fullBrudno}. In order to prove it we first consider the case of an ergodic source and show a relation as in \eqref{Brudnoeasy}.

\begin{prop}
\label{propaux}
Let $(\Omega_k,\theta, \pi)$ be an ergodic source where $\pi$ is a computable probability measure with KS-entropy $h_\pi^{KS}(\theta)$. Then, for almost all $\bi\in\Omega_k$ with respect to $\pi$,
$$
\lim_{n\to\infty} -\frac{\log\mu(\ii)}{n}= h_\pi^{KS}(\theta)\ ,
$$
where $\ii$ is the starting segment of $\bi$ of length $n$.
\end{prop}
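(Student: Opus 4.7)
The plan is to sandwich $-\log\mu(\ii)/n$ between two quantities that both converge, $\pi$-a.e., to $h^{KS}_\pi(\theta)$, and to do so by comparing the universal semi-measure $\mu$ directly with the computable measure $\pi$ and invoking the Shannon-McMillan-Breiman theorem \eqref{thm:shannon-macmilla-breiman}, which for the source $(\Omega_k,\theta,\pi)$ reads $\lim_{n\to\infty}-\log\pi(\ii)/n=h^{KS}_\pi(\theta)$ $\pi$-a.e.

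For the upper bound on $\limsup_n-\log\mu(\ii)/n$, I would reuse the device already exploited in Lemma \ref{thm:class}: since $\pi$ is computable but fails to be a semi-measure on $\Omega_k^*$ (Remark \ref{reminf}), define
$$f(\ii):=\frac{\delta(n)\,\pi(\ii)}{\sum_{m\geq 1}\delta(m)},\qquad \ii\in\Omega_k^{(n)},\qquad \delta(n)=\frac{1}{n\log^2 n}.$$
Then $f$ is a semi-computable semi-measure on $\Omega_k^{*}$, so universality of $\mu$ (Definition \ref{defAC}) supplies a constant $c_f>0$ with $c_f\,f(\ii)\leq\mu(\ii)$. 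Taking logarithms yields $-\log\mu(\ii)\leq-\log\pi(\ii)+\log n+2\log\log n+O(1)$; dividing by $n$ and applying Shannon-McMillan-Breiman gives $\limsup_n-\log\mu(\ii)/n\leq h^{KS}_\pi(\theta)$ $\pi$-a.e.

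For the matching lower bound, I would apply a Markov-type estimate to the ratio $\mu/\pi$. Since $\mu$ is a semi-measure, $\sum_{\ii\in\Omega_k^{(n)}}\mu(\ii)\leq 1$, and hence
$$\pi\bigl\{\ii\in\Omega_k^{(n)}\,:\,\mu(\ii)>2^{\epsilon n}\pi(\ii)\bigr\}\leq 2^{-\epsilon n}.$$
These probabilities are summable in $n$, so the Borel-Cantelli lemma guarantees that for $\pi$-a.e.\ $\bi$ the inequality $-\log\mu(\ii)\geq-\log\pi(\ii)-\epsilon n$ holds for all sufficiently large $n$. Dividing by $n$, passing to $\liminf$ and invoking Shannon-McMillan-Breiman yields $\liminf_n-\log\mu(\ii)/n\geq h^{KS}_\pi(\theta)-\epsilon$ $\pi$-a.e. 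Taking $\epsilon$ along a sequence decreasing to $0$ and intersecting the countably many full-measure sets thus produced, one obtains $\liminf_n-\log\mu(\ii)/n\geq h^{KS}_\pi(\theta)$ $\pi$-a.e.

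Combining the two bounds, the limit exists and equals $h^{KS}_\pi(\theta)$ $\pi$-almost surely. The hardest step, compared to the earlier Proposition \ref{GB} which only averages the bounds, is precisely the pointwise lower bound: it requires an exponentially sharp control on the set where $\mu$ overshoots $\pi$, which is why the Markov/Borel-Cantelli pair is the crux of the argument. The construction of the auxiliary semi-measure $f$ is a prerequisite for the upper bound but introduces no new ideas beyond those already in Lemma \ref{thm:class}.
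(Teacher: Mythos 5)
Your proof is correct, and your upper bound is essentially the paper's own: the same auxiliary semi-measure $f(\ii)=\delta(n)\,\pi(\ii)/\sum_m\delta(m)$ (the paper takes weights $n^{-2}$ in \eqref{inek1}, you take $1/(n\log^2 n)$ as in Lemma \ref{thm:class}; both work, provided the sum starts at $n=2$ where $\delta$ is defined), followed by universality of $\mu$ and the Shannon--Mc Millan--Breiman theorem \eqref{thm:shannon-macmilla-breiman}. Your lower bound, however, takes a genuinely different route. The paper follows Brudno's original counting argument: it passes from $\mu$ to prefix complexity via Levin's relation \eqref{thm:levin2}, counts programs via \eqref{compression} to obtain the cardinality bound \eqref{compreseeing-semimeasure} on $\{\ii:\mu(\ii)\geq 2^{-c}\}$, intersects with the AEP-typical set $A^{(n)}_\epsilon$, and shows that the bad set $\hat A^{(n)}_\epsilon\cup\tilde A^{(n)}_\epsilon$ has $\pi$-measure of order $2^{-n\epsilon+\alpha(n)}$. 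You instead apply a Markov-type inequality directly to the ratio $\mu/\pi$: since $\sum_{\ii\in\Omega_k^{(n)}}\mu(\ii)\leq 1$, the set where $\mu(\ii)>2^{\epsilon n}\pi(\ii)$ has $\pi$-measure at most $2^{-\epsilon n}$, and Borel--Cantelli plus a countable intersection over $\epsilon\downarrow 0$ finishes. Your version is more elementary and in a sense more complete: it uses only the semi-measure property of $\mu$, bypassing \eqref{thm:levin2}, \eqref{compression} and typicality of $A^{(n)}_\epsilon$ altogether (indeed it needs no computability of $\pi$ for this direction), and it makes explicit the final almost-everywhere step, which the paper's sketch only gestures at with ``can be made arbitrarily small''. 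What the paper's heavier counting route buys is the explicit cardinality estimate \eqref{compreseeing-semimeasure} with its error exponent $\alpha(n)$, which is not a throwaway: it is reused verbatim as \eqref{auxeq1} in the quantum Theorem \ref{thm:brudnoq}, where the eigenvalues $\mu_{\ii}$ of the universal semi-density matrix must be controlled by counting, and no reference measure is available against which to run your Markov bound.
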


\begin{proof}
We start by proving that the inequality
$$
\lim_{n\to\infty} -\frac{\log\mu(\ii)}{n}\,\leq\, h_\pi^{KS}(\theta)\ ,
$$
holds almost everywhere with respect to $\pi$.
Let us consider the probability measure on $\Omega^{(n)}_k$:
\begin{equation}\label{inek1}
f(\ii)=\frac{1}{\sum_{n=1}^\infty n^{-2}}\frac{1}{n^2}\pi(\ii)\ ,
\end{equation}
where $\sum_{n=1}^\infty n^{-2}<+\infty$.
Since $\pi$ is computable, the same is true of $f$.
Then, by the universality of the semi-measure $\mu$, there exists a constant number $c_f>0$ such that
$$
\frac{c_f}{\sum_{n=1}^\infty n^{-2}} \frac{1}{n^2}\pi(\ii)=c_f\,f(\ii)\leq
\mu(\ii)\ .
$$
Then, the Shannon-Mc Millan-Breiman theorem (see~\eqref{thm:shannon-macmilla-breiman}) yields
\begin{equation}
\label{inek2}
\limsup_{n\to\infty} -\frac{\log\mu(\ii)}{n}\leq
\limsup_{n\to\infty} -\frac{\log\pi(\ii)}{n}=
h_\pi^{KS}(\theta),\quad \pi-a.e \ .
\end{equation}
The inequality
$$
\liminf_{n\to\infty} -\frac{\log\mu(\ii)}{n}\,\geq\, h_\pi^{KS}(\theta),\quad \pi-a.e\ ,
$$
follows from a counting argument as in the original proof
of Brudno's theorem~\cite{Brudno} that we shortly sketch.
From the Asymptotic Equipartition Property (AEP) and the Shannon-Mc
Millan-Breiman theorem ~\cite{billingsley}, we know
that, given the set
$$
A_\epsilon^{(n)}=\{\ii\in\Omega_2^{(n)}|
2^{-n(h_\pi+\epsilon)}\leq \pi(\ii)\leq
2^{-n(h_\pi-\epsilon)}\}\ ,
$$
with $h_\pi=h_\pi^{KS}(\theta)$, by choosing $n$ large enough one can make this set $\epsilon$-typical in the sense that
\begin{equation}
\label{typicalset}
Prob(A_\epsilon^{(n)})\geq 1-\epsilon\ ,\qquad (1-\epsilon)
2^{n(h_\pi-\epsilon)}<\#(A_\epsilon^{(n)})<2^{-n(h_\pi+\epsilon)}\ .
\end{equation}
On the other hand, by~\eqref{thm:levin2}, there exists a constant $c'>0$ such that
$$
-\log\mu(\ii)< K(\ii)+c'\qquad\forall\ \ii\in\Omega^*_2\ .
$$
Using~\eqref{compression}, one gets
$-\log\mu(\ii)\leq  C(\ii) + 2\log n + c'+c''$. Therefore,
\begin{eqnarray}
\nonumber
\#\{\ii:\, -\log\mu(\ii)< c   \} &\leq& \#\{\ii:\, C(\ii) + 2\log n + c'+c''<c   \} \\
\nonumber
&\leq& 2^{c-2\log n - c'-c''}\quad\hbox{whence}\\
\label{compreseeing-semimeasure}
\#\{\ii: \mu(\ii)&\geq& 2^{-c'}\}\leq 2^{c'+\alpha(n)}\ ,
\end{eqnarray}
where $\alpha(n)=-2\log n - c'-c''$.
Let us now consider the following subset of $A_\epsilon^{(n)}$,
\begin{equation}
\label{AEPT}
\hat{A}_\epsilon^{(n)}=\{\ii \in A_\epsilon^{(n)}\ :\ \mu(\ii)\geq
2^{-n(h_\pi-2\epsilon)}\}\ .
\end{equation}
Its measure can be bounded by
\begin{align}
\pi(\hat{A}_\epsilon^{(n)})
&\leq  \#(\hat{A}_\epsilon^{(n)})
\cdot \max_{\ion} \pi(\ii) \nonumber\\
 & \leq 2^{n(h_\pi-2\epsilon)+\alpha(n)+1}\cdot
2^{-n(h_\pi-\epsilon)}=2^{-n\epsilon+\alpha(n)+1}\ . \label{auxBrudno1}
\end{align}
As for the strings $\ii\notin A_\epsilon^{(n)}$
such that $\mu(\ii)\geq 2^{-n(h_\pi-2\epsilon)}$, let
$$
\tilde{A}_\epsilon^{(n)}=\{\ii\ :\ \mu(\bi^{(n)})\geq
2^{-n(h_\pi-2\epsilon)},\quad \ii\in (A_\epsilon^{(n)})^c\}\ ,
$$
where $(A_\epsilon^{(n)})^c=\Omega_2 \backslash
A_\epsilon^{(n)}$. Since $\tilde{A}_\epsilon^{(n)}\subseteq(A_\epsilon^{(n)})^c$ and
$A_\epsilon^{(n)}$ is typical (see~\eqref{typicalset}), $\pi((A_\epsilon^{(n)})^c)$ can be made arbitrarily small by choosing $n$ large enough and, because of~\eqref{auxBrudno1}, the same is true of
$\pi(\hat{A}_\epsilon^{(n)}\cup
\tilde{A}_\epsilon^{(n)})$.
Therefore, if $\ii\in\Omega^*_2$ is $\epsilon$-typical, one has
$$
\liminf_{n\to\infty} -\frac{\log\mu(\ii)}{n}\geq
h_\pi^{KS}(\theta)-2 \epsilon,\quad \pi-a.e\	 .
$$
\end{proof}

The previous results provide another proof of Brudno's Theorem with respect to the original one in~\cite{}.
\medskip

\begin{thm}
Let $(\mathcal{X}, T, \nu)$ be an ergodic dynamical system  with KS-entropy $h_\nu^{KS}(T)$.  Then,
$$
c(x)=\sup_{\nu_\mathcal{P}}c_\mathcal{P}(x)= h_\nu^{KS}(T),\quad \nu-a.e\ ,
$$
where the $\sup$ is taken over all $\mathcal{P}$  such that the probability measure $\nu_\mathcal{P}$ is computable probability measure.
\end{thm}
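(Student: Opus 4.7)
The plan is to show the equality partition by partition first, and then handle the supremum by a countable approximation argument. Throughout, the ergodicity of $(\mathcal{X},T,\nu)$ transfers to the symbolic model $(\Omega_k,\theta,\nu_\mathcal{P})$ for any finite measurable partition $\mathcal{P}$.

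First I would fix a partition $\mathcal{P}$ with $\nu_\mathcal{P}$ computable and apply Proposition~\ref{propaux} to the ergodic source $(\Omega_k,\theta,\nu_\mathcal{P})$. This yields
\[
\lim_{n\to\infty}-\frac{\log\mu(\ii_\mathcal{P}(x))}{n}\,=\,h_\nu^{KS}(T,\mathcal{P})\qquad \nu\text{-a.e.}
\]
Combining with Levin's relation $K(\cdot)\overset{+}{=}-\log\mu(\cdot)$ in \eqref{thm:levin2} and the two-sided bound $C(\ii)-c\le K(\ii)\le C(\ii)+2\log n+c_p$ coming from \eqref{re:kol} (together with the trivial $C\le K+\text{const}$), dividing by $n$ and sending $n\to\infty$ both $K(\ii_\mathcal{P}(x))/n$ and $C(\ii_\mathcal{P}(x))/n$ converge to $h_\nu^{KS}(T,\mathcal{P})$ for $\nu$-a.e.\ $x$. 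In particular, the $\limsup$ in \eqref{compratep} is actually a limit and $c_\mathcal{P}(x)=h_\nu^{KS}(T,\mathcal{P})$ outside a $\nu$-null set $N_\mathcal{P}$.

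To handle the supremum over computable-measure partitions, I would invoke Proposition~\ref{GB}, which states $g_\nu(T)=h_\nu^{KS}(T)$ with $g_\nu(T)$ the supremum over partitions with $\nu_\mathcal{P}$ (semi-)computable. This guarantees the existence of a countable sequence $\{\mathcal{P}_m\}_{m\in\n}$ of computable-measure partitions with $h_\nu^{KS}(T,\mathcal{P}_m)\to h_\nu^{KS}(T)$. Setting $N=\bigcup_m N_{\mathcal{P}_m}$, which is still $\nu$-null, we obtain for $x\notin N$
\[
\sup_{\nu_\mathcal{P}\,\text{comp.}}c_\mathcal{P}(x)\;\ge\;\sup_m c_{\mathcal{P}_m}(x)\;=\;\sup_m h_\nu^{KS}(T,\mathcal{P}_m)\;=\;h_\nu^{KS}(T).
\]
For the reverse inequality, the restricted sup is trivially bounded above by the unrestricted one, and the classical Brudno theorem \eqref{fullBrudno} gives $\sup_\mathcal{P} c_\mathcal{P}(x)=h_\nu^{KS}(T)$ $\nu$-a.e. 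Enlarging the null set once more yields the stated equality $\nu$-almost everywhere.

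The main delicate point is the passage from pointwise equality per partition to the supremum. The sup runs over an uncountable family, so one cannot directly union-bound the exceptional sets; the resolution is that the \emph{upper} bound $c_\mathcal{P}(x)\le h_\nu^{KS}(T)$ is supplied uniformly by Brudno's original theorem, while Proposition~\ref{GB} ensures that the entropy value $h_\nu^{KS}(T)$ is actually attained in the limit along a \emph{countable} subfamily of computable-measure partitions, which is precisely what is needed to push the a.e.\ equality through the supremum.
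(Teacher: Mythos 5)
Your proposal is correct in substance and shares the paper's backbone: the per-partition statement is obtained, as in the paper, from Proposition~\ref{propaux} applied to the symbolic source $(\Omega_k,\theta,\nu_\mathcal{P})$ together with Levin's relation \eqref{thm:levin2} and the bound \eqref{re:kol}. Where you genuinely diverge is in how you close the two halves at the level of the supremum. The paper stays entirely inside the semi-measure framework: the a.e.\ upper bound $c_\mathcal{P}(x)\le h_\nu^{KS}(T)$ comes from \eqref{inek2} (universality of $\mu$ applied to the computable measure $\nu_\mathcal{P}$) combined with \eqref{thm:levin2}, and the lower bound is referred back to the counting argument in the proof of Proposition~\ref{propaux}; you instead import the classical Brudno theorem \eqref{fullBrudno} to supply a uniform upper bound over the uncountable family of partitions. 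That step is logically valid, but it is circular relative to the paper's declared purpose --- the theorem is explicitly presented as \emph{another proof} of Brudno's theorem, so \eqref{fullBrudno} should not be an ingredient; the internal upper bound via universality plus Levin is precisely what keeps the paper's argument independent of the classical result. On the other side of the ledger, your handling of the lower bound is more careful than the paper's: by combining Proposition~\ref{GB} with the inequality $g_\nu(T,\mathcal{P})\le h_\nu^{KS}(T,\mathcal{P})$ from Lemma~\ref{thm:class}, you extract a \emph{countable} subfamily of computable-measure partitions with $h_\nu^{KS}(T,\mathcal{P}_m)\to h_\nu^{KS}(T)$, so the exceptional null sets can be union-bounded --- a step the paper compresses into ``again follows from the counting argument.'' It is also worth noting that the uncountability issue you flag for the upper bound is present, and left unaddressed, in the paper's own ``taking the sup over all partitions'' step (the exceptional set depends on the partition, not merely on the countably many computable pushforward measures), so your explicit appeal to \eqref{fullBrudno} trades independence from the classical theorem for rigor on exactly the point the paper glosses over; to match the paper's intent you would need to replace that appeal by the per-partition bound from \eqref{inek2} and \eqref{thm:levin2}.
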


\begin{proof}
Let $\mathcal{P}$ be a finite measurable partition of $\mathcal{X}$, with $k$ atoms,
such that $\nu_\mathcal{P}$ is a computable probability measure. From~\eqref{inek2},
$$
\limsup_{n\to\infty} -\frac{\log\mu(\ii_\mathcal{P}(x))}{n}= h_{\nu_\mathcal{P}}^{KS}(\theta, \mathcal{P})\leq h_\nu^{KS}(\theta),\quad \nu-a.e\ ,
$$
where $\ii_\mathcal{P}(x)\in\Omega^{(n)}_k$.
Therefore, using~\eqref{thm:levin2}, we derive
$$
c_{\mathcal{P}}(x)\leq h_\nu^{KS}(\theta),\quad \nu-a.e\ ,
$$
whence, by taking the $\sup$ over all partitions such that $\nu_{\mathcal{P}}$ is semi-computable, we have
$$
c(x):=\sup_\mathcal{P}c_{\mathcal{P}}(x)\leq h_\nu^{KS}(\theta)\ ,\quad \nu-a.e\ .
$$
The proof of the inequality $c(x)\geq  h_\nu^{KS}(\theta)$ again follows from
the counting argument as in the proof of Proposition~\ref{propaux}.
\end{proof}

\section{Brudno's Relation: Quantum Spin Chains}

By means of Gacs complexity $\oh(\rho)$ and the quantum Shannon-Mac Millan
theorem~\cite{BSM}, we are now able to extend Brudno's result
to the shift over ergodic quantum spin chains providing a quantum version of the $\nu$-a.e. classical condition which is missing in~\cite{FSA}.
\medskip

We start by showing that there exists an effective, algorithmic procedure to construct semi-computable states, that are also faithful, namely, such that their local finite dimensional restrictions $\rho^{(n)}$ have no zero eigenvalues. Indeed, such a restriction is necessary for the proof of the main result in Theorem \ref{thm:brudnoq}.

Let us consider a sequence of semi-computable semi-density matrices $\rho^{(n)}$; namely, for each fixed $n$ there exists a sequence of quasi-increasing elementary semi-density matrices $\rho^{(n)}_m$  such that $\lim_m\rho^{(n)}_m=\rho^{(n)}$ in trace-norm.
\medskip

\begin{defn}
A faithful state $\omega$ on $\mathcal{M}$ is called semi-computable
(computable) if the associated sequence of local density matrices
$\rho^{(n)}$ on $M_{[-n, n]}$ with $rank(\rho^{(n)}_m)=2^{2n+1}$,
%(\ref{Applications of Upper Gacs Complexity})
is such that the  semi-computable (computable)
semi-density matrices and the function $(m, n)\to a_{n
m}$ from $\n\times\n \to \n$ is  computable.
\end{defn}

\begin{rem}
\label{semcomprem}
In general, given a sequence of unitary operator $U^{(n)}$ with
rational entries and a given quasi-increasing sequence of semi-computable semi-density matrix $\rho^{(n)}$, the sequence of semi-computable semi-density matries $U^{(n)}\,\rho^{(n)}\,(U^{(n)})^\dag$ need not be quasi-increasing.
\qed
\end{rem}

The previous observation motivates the following preliminary auxiliary result.

\begin{lem}
\label{semi-unive}
Let $(\mathcal{M}, \Theta, \omega)$ be a quantum spin chain with
$\omega$ a semi-computable faithful state, whose restrictions to the local
algebras $M_{[0,n-1]}$ correspond to density matrices
of $\rho^{(n)}$ of full rank $2^n$. Let
\begin{equation}
\label{musp}
\hat{\mu}^{(n)}=\sum_{\ii\in\Omega_2^{(n)}}\,\mu_{\ii}\,\vert\mu_{\ii}\rangle\langle\mu_{\ii}|\ ,
\end{equation}
be the spectral representation of the restriction of the universal semi-density matrix $\hat \mu$ in~\eqref{usdm} to the local Hilbert space $\mathbb{H}_{[0,n-1]}$ and let
\begin{equation}
\label{rosp}
\rho^{(n)}=\sum_{\ii\in\Omega_2^{(n)}} r_{\ii}|r_{\ii}><r_{\ii}|\ ,
\end{equation}
be the spectral representation of $\rho^{(n)}$. Finally, define $U^{(n)}$ as the unitary operator transforming the spectral support of $\rho^{(n)}$ onto that of $\hat\mu^{(n)}$:
$U^{(n)}|\mu_{\ii}>=|r_{\ii}>$.
Then,
$$
\limsup_{n\to\infty }\frac{1}{n}\log{\rm Tr}(\sigma^{(n)} \hat{\mu})=\limsup_{n\to\infty }\frac{1}{n}\log{\rm Tr}(\sigma^{(n)}\, U^{(n)} \hat{\mu}\, (U^{(n)})^\dag)\ ,
$$
for any semi-computable density matrix $\sigma^{(n)}\in M_{[0, n-1]}$.
\end{lem}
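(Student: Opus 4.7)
The plan is to bound the two quantities against each other up to a subexponential factor via the universality of $\hat\mu$ (Theorem \ref{univdem}), then pass to the $\limsup$. Set $A_n := \mathrm{Tr}(\sigma^{(n)}\hat\mu)$ and $B_n := \mathrm{Tr}(\sigma^{(n)} U^{(n)}\hat\mu (U^{(n)})^\dagger)$, extending $U^{(n)}$ by the identity outside $\mathbb{H}_{[0,n-1]}$. The faithfulness of $\omega$ guarantees that the $\vert r_{\ii}\rangle$ span $\mathbb{H}_{[0,n-1]}$, so $U^{(n)}$ is a genuine unitary.

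First I would verify that $U^{(n)}$ is semi-computable. By Remark \ref{remaux}, the eigenvectors and eigenvalues of $\rho^{(n)}$ are semi-computable, and the same argument applied to the local partial trace $\hat\mu^{(n)}$ of $\hat\mu$ yields semi-computability of its spectral data. Hence $U^{(n)}$ defined by $U^{(n)}\vert\mu_{\ii}\rangle = \vert r_{\ii}\rangle$ and its adjoint are semi-computable, and consequently both $U^{(n)}\hat\mu(U^{(n)})^\dagger$ and $(U^{(n)})^\dagger\hat\mu U^{(n)}$ are semi-computable semi-density matrices on $\mathbb{H}$ of trace $\mathrm{Tr}\hat\mu \leq 1$.

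Next, choose a computable probability weight $p_n > 0$ with $\sum_n p_n \leq 1$ and $\log p_n = O(\log n)$, e.g.\ $p_n = 6/(\pi^2 n^2)$, and form the aggregate semi-densities
$$
\hat\nu := \sum_{n\geq 1} p_n\, U^{(n)}\hat\mu(U^{(n)})^\dagger, \qquad \hat\nu' := \sum_{n\geq 1} p_n\, (U^{(n)})^\dagger\hat\mu U^{(n)}.
$$
By diagonally assembling the elementary quasi-increasing approximations of the summands, both $\hat\nu$ and $\hat\nu'$ qualify as semi-computable semi-density matrices in the sense of Definition \ref{semicomp}. Theorem \ref{univdem} then produces constants $c, c' > 0$ with $c\,\hat\nu \leq \hat\mu$ and $c'\,\hat\nu' \leq \hat\mu$. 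Keeping only the $n$-th term yields the operator inequalities
$$
c\,p_n\, U^{(n)}\hat\mu(U^{(n)})^\dagger \leq \hat\mu, \qquad c'\,p_n\,(U^{(n)})^\dagger\hat\mu U^{(n)} \leq \hat\mu.
$$
Tracing the first against $\sigma^{(n)}\geq 0$ gives $c\,p_n\,B_n \leq A_n$; conjugating the second by the unitary $U^{(n)}$ converts it to $c'\,p_n\,\hat\mu \leq U^{(n)}\hat\mu(U^{(n)})^\dagger$, and tracing against $\sigma^{(n)}$ yields $c'\,p_n\,A_n \leq B_n$. Taking logarithms, dividing by $n$, and using $\frac{1}{n}\log p_n \to 0$, both inequalities survive the $\limsup$ and deliver the claimed equality.

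The main obstacle is the careful verification in the third paragraph that $\hat\nu$ and $\hat\nu'$ are genuinely semi-computable semi-density matrices, i.e.\ approximable in trace norm by a single computable quasi-increasing sequence of elementary semi-density matrices, rather than mere positive trace-class operators. This demands uniform control over the elementary approximations to the various $U^{(n)}$ and $\hat\mu^{(n)}$, which in turn relies on the computability hypotheses on $\omega$ and on Remark \ref{remaux}.
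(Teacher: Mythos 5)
Your overall architecture --- a weighted aggregate $\sum_n p_n(\cdot)$, universality of $\hat{\mu}$ via Theorem~\ref{univdem}, two-sided operator inequalities, and $\frac{1}{n}\log p_n\to 0$ --- is the same as the paper's (the paper weights by $1/(n\log^2 n)$ where you use $1/n^2$; both work, and your two-sided bounds $c\,p_n B_n\le A_n$, $c'p_n A_n\le B_n$ correctly mirror the paper's inequality \eqref{in:unitary} and its conjugate). The genuine gap is exactly the step you yourself flag as ``the main obstacle'': you apply universality to $\hat{\nu}=\sum_n p_n\,U^{(n)}\hat{\mu}(U^{(n)})^\dagger$ built from the \emph{limit} unitaries $U^{(n)}$, which requires $U^{(n)}\hat{\mu}(U^{(n)})^\dagger$ to be semi-computable in the strict sense of Definition~\ref{semicomp}, i.e.\ the trace-norm limit of a \emph{computable quasi-increasing sequence of elementary} semi-density matrices. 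Semi-computability of the spectral data of $\rho^{(n)}$ and $\hat{\mu}^{(n)}$ (Remark~\ref{remaux}) does not deliver your ``consequently'': the eigenvectors $\vert r_{\mathbf{i}}\rangle$, $\vert\mu_{\mathbf{i}}\rangle$ are only limits of elementary vectors, so $U^{(n)}$ is not elementary, and if one conjugates the elementary approximants $\hat{\mu}^{(k)}$ of $\hat{\mu}$ by \emph{varying} elementary approximants of $U^{(n)}$, the quasi-increasing property --- an operator-order condition --- is not preserved. This is precisely the failure mode the paper records in Remark~\ref{semcomprem} as the motivation for the lemma, so the ``diagonal assembly'' does not go through as stated.

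The paper circumvents rather than verifies this claim, by a two-stage argument you would need to reproduce: (i) fix $m$ and work with the \emph{elementary} unitaries $U^{(n)}_m$ (rational entries, matching the spectral supports of the elementary approximants $\rho^{(n)}_m$ and $\hat{\mu}^{(n)}_m$); conjugating the quasi-increasing sequence $\hat{\mu}^{(k)}$ by a \emph{single fixed} elementary unitary does preserve quasi-increasingness, so $(U^{(n)}_m)^\dagger\hat{\mu}\,U^{(n)}_m$ is legitimately semi-computable, the aggregate $\hat{K}_m=\sum_{n\ge 2}\frac{1}{n\log^2 n}(U^{(n)}_m)^\dagger\hat{\mu}\,U^{(n)}_m$ is a semi-computable semi-density matrix, and universality yields the two-sided bounds with $U^{(n)}_m$ in place of $U^{(n)}$; (ii) transfer these bounds to $U^{(n)}$ purely analytically, via the estimate $\bigl|{\rm Tr}\bigl(\sigma^{(n)}U^{(n)}_m\hat{\mu}(U^{(n)}_m)^\dagger\bigr)-{\rm Tr}\bigl(\sigma^{(n)}U^{(n)}\hat{\mu}(U^{(n)})^\dagger\bigr)\bigr|\le 2\,\|U^{(n)}_m-U^{(n)}\|_1$ and the limit $m\to\infty$, with no semi-computability claim ever made for conjugation by $U^{(n)}$ itself. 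In short, the deferred verification in your third paragraph is not a routine check to be supplied later: avoiding it is the actual content of the lemma, and without stages (i)--(ii) your proof is incomplete.
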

\medskip

\begin{rem}
\label{rem2}
Notice that both $\hat{\mu}^{(n)}$ and $\rho^{(n)}$ are $2^n\times 2^n$ matrices with $2^n$
eigenvalues that can be listed in decreasing order and then associated to the binary encodings $\ii\in\Omega^{(n)}_2$ of their labels in the list.
Since the universal semi-density matrix is full rank, the $\rho^n$'s  must be full rank, too, whence the demand of faithfulness of the semi-computable semi-density matrices $\rho^{(n)}$ .
\qed
\end{rem}
\bigskip

\begin{proof}
By \cite{FSA}, the local restrictions $\hat{\mu}^{(n)}=P^{(n)} \hat{\mu} P^{(n)}$, where
$P^{(n)}$ projects from $\mathbb{H}$ onto $\mathbb{H}_{[0,n-1]}$ for each
$n\in\n$, are universal semi-density matrices in $M_{[0, n-1]}$.

Since  $\rho^{(n)}$ and $\hat{\mu}^{(n)}$ are semi-computable
density matrices, there exist computable, quasi-increasing elementary matrices $\{\rho^{(n)}_m\}_{m\in\n}$ and $\{\hat{\mu}^{(n)}_m\}_{m\in\n}$
such that $\lim_m\rho^{(n)}_m=\rho^{(n)}$, respectively
$\lim_m\hat{\mu}^{(n)}_m=\hat{\mu}^{(n)}$, in trace-norm.
Moreover, the global spin-chain state
$\omega$ is assumed to be faithful, so that the ranks of $\rho^{(n)}_m$
and $\hat{\mu}^{(n)}_m$ can be taken equal to $2^n$, for each $n$.
Therefore, the unitary operators $U^{(n)}_m$ sending the spectral support of
the elementary matrix $\rho^{(n)}_m$ into that of the elementary matrix
$\hat{\mu}^{(n)}_m$ have rational entries.

On the other hand, $\hat{\mu}$ is a semi-computable semi-density matrix; then,
there exits a quasi-increasing (see Definition~\ref{semicomp}) sequence of elementary semi-density matrices $\hat{\mu}^{(k)}$ such that $\lim_k\hat{\mu}^{(k)}=\hat{\mu}$ in trace-norm. Therefore, for fixed $n$ and $m$, when $k\to\infty$, the elementary semi-density matrices $(U_m^{(n)})^\dag\hat{\mu}^{(k)}\, U^{(n)}_m$ form a quasi-increasing sequence converging to $(U^{(n)}_m)^\dag \hat{\mu}\, U_m^{(n)}$ which is thus also a semi-computable semi-density matrix. Let us consider the following  operator
\begin{equation}
\hat{K}_m=\sum_{n\geq2} \frac{1}{n \log^2 n} (U_m^{(n)})^\dag \hat{\mu} U^{(n)}_m\ ,
\end{equation}
which, by construction, is a semi-computable semi-density matrix for each $m\in\n$. Therefore, there exists a
constant $c_m>0$ such that
\begin{equation}\label{in:unitary}
 c_m \frac{(U_m^{(n)})^\dag \hat{\mu}\, U_m^{(n)}}{n \log^2 n} \leq c_m \hat{K}_m \leq \hat{\mu}\ ,
 \end{equation}
 and hence
\begin{equation}
\hat{\mu} \leq  \frac{n \log^2 n}{c_m}\, U_m^{(n)} \hat{\mu} (U^{(n)}_m)^\dag\ .
\end{equation}
Let $\sigma^{(n)}$ be a semi-density matrix on $M_{[-n, n]}$. Then,
\begin{equation}
\limsup_{n\to\infty }\frac{1}{n}\log{\rm Tr}(\sigma^{(n)} \hat{\mu})\leq
\limsup_{n\to\infty }\frac{1}{n}\log {\rm Tr}\Big(\sigma^{(n)} U_m^{(n)} \hat{\mu}\ (
U_m^{(n)})^\dag\Big)\ .
\end{equation}
On the other hand,
\begin{eqnarray}
\nonumber
&&
\left|{\rm Tr}\Big(\sigma^{(n)} U_m^{(n)} \hat{\mu}\ (U_m^{(n)})^\dag-
\sigma^{(n)} U^{(n)} \hat{\mu}\ (U^{(n)})^\dag\Big) \right|\ \leq\\
\nonumber
&&
\leq
\|\sigma^{(n)}\|\,\|U_m^{(n)} \hat{\mu} (U_m^{(n)})^\dag-  U^{(n)} \hat{\mu}\ (U^{(n)})^\dag \|_1\\
\nonumber
&&
\leq\| U_m^{(n)} \hat{\mu}\ (U_m^{(n)})^\dag-  U^{(n)} \hat{\mu}\ (U^{(n)})^\dag\|_1\\
\nonumber
&&
\leq\left\| U_m^{(n)} \hat{\mu}\ \Big((U_m^{(n)})^\dag\,-\,(U^{(n)})^\dag\Big)\right\|_1
\,+\,\left\|\Big(U_m^{(n)}-U^{(n)}\Big) \hat{\mu}\ (U^{(n)})^\dag\right\|_1\\
\nonumber
&&
\leq
\|U_m^{(n)}\, \hat{\mu}\|\,\|(U_m^{(n)})^\dag -
(U^{(n)})^\dag\|_1\, +\, \|\hat{\mu}\, (U^{(n)})^\dag\|\,\|U_m^{(n)}- U^{(n)})\|_1\\
\label{in:unitarylimit}
&&
\leq\,2\,\|U_m^{(n)} - U^{(n)}\|_1\ .
\end{eqnarray}
Thefore, since $\lim_m U^{(n)}_m=U^{(n)}$ in trace-norm,
$$
\limsup_{n\to\infty }\frac{1}{n}\log{\rm Tr}(\sigma^{(n)} \hat{\mu})\leq \limsup_{n\to\infty }\frac{1}{n}\log {\rm Tr}(\sigma^{(n)} U^{(n)} \hat{\mu}\ (U^{(n)})^\dag )\ .
$$
A similar argument shows that the inequality can be inverted.
Indeed, from the relation \eqref{in:unitary}, we have
$$
\limsup_{n\to\infty }\frac{1}{n}\log {\rm Tr}\Big(\sigma^{(n)} U_m^{(n)} \hat{\mu}\ (
U_m^{(n)})^\dag\Big)\ \leq \limsup_{n\to\infty }\frac{1}{n}\log{\rm Tr}(\sigma^{(n)} \hat{\mu})\ .
$$
Finally, using~\eqref{in:unitarylimit} and taking the limit when $m\to\infty$ yield
$$
\limsup_{n\to\infty }\frac{1}{n}\log {\rm Tr}\Big(\sigma^{(n)} U^{(n)} \hat{\mu}\ (
U^{(n)})^\dag\Big)\ \leq \limsup_{n\to\infty }\frac{1}{n}\log{\rm Tr}(\sigma^{(n)} \hat{\mu})\ .
$$

\end{proof}

In~\cite{FSA}, a Brudno type relation was established between the von Neumann entropy rate and the Gacs complexity rate along a sequence of local restrictions of any shift-invariant state on a quantum spin chain. In order to fully extend the classical Brudno relation to the quantum setting, one ought to introduce a quantum analog of the almost everywhere condition in~\eqref{fullBrudno}. A similar problem was encountered in~\cite{BKMSSS}, where use was made of the notion of $\epsilon$-typicality (condition~\eqref{QSMM} in the theorem below) of minimal projectors with respect to a given state.
The latter condition was an essential ingredient in establishing a quantum version~\cite{BSM} of the classical Shannon-Mc Millan-Breiman theorem that we now briefly introduce.
\medskip

\begin{thm} (Quantum Shannon-McMillan Theorem).
Let $\omega$ be an ergodic state on $\mathcal{M}$ with von Neumann entropy rate $s(\omega)$. Then, for all $\epsilon> 0$ there is an $N_\epsilon$ such that for all $n\geq N_\epsilon$ there exists a projector $p^{(n)}(\epsilon)\in M_{[-n,n]}$ such that
\begin{itemize}
\item
it is $\epsilon$-typical, namely
\begin{equation}
\label{QSMM}
{\rm Tr}\Big(\rho^{(n)}\,p^{(n)}(\epsilon)\Big)) \geq 1 -\epsilon\ ;
\end{equation}
\item
for all minimal projectors $p\in M_{[-n,n]}$ with $p\leq p^{(n)}(\epsilon)$ one has
\begin{eqnarray}
\label{QSMMa}
&&
{\rm e}^{-n(s(\omega)+\epsilon)} < {\rm Tr}\Big(\rho^{(n)}\,p\Big) < {\rm e}^{-n(s(\omega)-\epsilon)}\\ 
\label{QSMMb}
&&
{\rm e}^{n(s(\omega)-\epsilon)} < {\rm Tr}(p^{(n)}(\epsilon)) < {\rm e}^{n(s(\omega)+\epsilon)}\ .
\end{eqnarray}
\end{itemize}
\end{thm}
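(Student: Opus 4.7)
The plan is to take $p^{(n)}(\epsilon)$ to be the spectral projector of $\rho^{(n)}$ onto those eigenvalues lying in a narrow window around $e^{-ns(\omega)}$, and then to establish the associated quantum asymptotic equipartition property. Concretely, I spectrally decompose
\[
\rho^{(n)}=\sum_i r_i^{(n)}\,\ket{\phi_i^{(n)}}\bra{\phi_i^{(n)}}
\]
and set
\[
p^{(n)}(\epsilon):=\sum_{i\,:\,e^{-n(s(\omega)+\epsilon)}<r_i^{(n)}<e^{-n(s(\omega)-\epsilon)}}\ket{\phi_i^{(n)}}\bra{\phi_i^{(n)}}.
\]
With this choice, \eqref{QSMMa} is automatic: any minimal projector $p\leq p^{(n)}(\epsilon)$ is a rank-one spectral projector of $\rho^{(n)}$ whose eigenvalue lies, by construction, in the prescribed window, so ${\rm Tr}(\rho^{(n)}\,p)$ equals that eigenvalue. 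The rank bound \eqref{QSMMb} is then forced by the sandwich
\[
{\rm Tr}(p^{(n)}(\epsilon))\,e^{-n(s(\omega)+\epsilon)}<{\rm Tr}(\rho^{(n)}\,p^{(n)}(\epsilon))<{\rm Tr}(p^{(n)}(\epsilon))\,e^{-n(s(\omega)-\epsilon)},
\]
together with $1-\epsilon\leq {\rm Tr}(\rho^{(n)}\,p^{(n)}(\epsilon))\leq 1$---that is, with \eqref{QSMM}---modulo at worst a harmless relabelling of $\epsilon$.

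The substantive task is therefore \eqref{QSMM}: the spectral distribution of $\rho^{(n)}$, weighted by $\rho^{(n)}$ itself, must concentrate on the range $\bigl(e^{-n(s(\omega)+\epsilon)},\, e^{-n(s(\omega)-\epsilon)}\bigr)$. For a product (i.i.d.) state this would be immediate from the weak law of large numbers applied to $n$ independent copies of $-\log\lambda_k$, with $\lambda_k$ ranging over the spectrum of the single-site state; but for a generic ergodic translation-invariant $\omega$ the spectral decompositions of $\rho^{(n)}$ at different scales need not be mutually compatible, so Birkhoff's ergodic theorem cannot be invoked on eigenvalues directly. The approach I would take is the two-scale coarse-graining of Bjelakovi\'c, Kr\"uger, Siegmund-Schultze and Szko{\l}a: for a large but fixed block length $\ell$, approximate $\rho^{(n)}$ (with $n=N\ell$) by the block-product state $\tilde\rho^{(n)}:=(\rho^{(\ell)})^{\otimes N}$, and reduce the AEP for $\rho^{(n)}$ to a classical AEP for the ergodic symbolic process whose alphabet is the spectrum of $\rho^{(\ell)}$. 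The classical Shannon--McMillan--Breiman theorem then handles the symbolic process; thanks to the entropy-rate identity $s(\omega)=\lim_\ell\frac{1}{\ell}S(\rho^{(\ell)})$, the classical entropy of this process matches the quantum target $s(\omega)$ as $\ell\to\infty$.

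The main obstacle will be controlling the mismatch between the spectral decompositions of $\rho^{(n)}$ and $\tilde\rho^{(n)}$: the typicality projector natural to $\tilde\rho^{(n)}$ (where the classical AEP directly applies) does not commute with $\rho^{(n)}$, so the substitution ${\rm Tr}(\rho^{(n)}\,q)\approx{\rm Tr}(\tilde\rho^{(n)}\,q)$ for the relevant spectral projector $q$ must be justified operator-theoretically. The standard tools here are Fannes' continuity inequality applied to $\|\rho^{(n)}-\tilde\rho^{(n)}\|_1$ and Pinsker-type bounds on the relative entropy $S(\rho^{(n)}\|\tilde\rho^{(n)})$, both ultimately controlled by the entropy deficit $\frac{1}{\ell}S(\rho^{(\ell)})-\frac{1}{n}S(\rho^{(n)})$, which vanishes as $\ell\to\infty$ by ergodicity of $\Theta$ and existence of $s(\omega)$. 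Once this non-commutative comparison is in place, the two tail probabilities---the $\rho^{(n)}$-expectations of the spectral projectors of $\rho^{(n)}$ onto eigenvalues below $e^{-n(s(\omega)+\epsilon)}$ and above $e^{-n(s(\omega)-\epsilon)}$ respectively---both tend to zero in the iterated limit $n\to\infty$ followed by $\ell\to\infty$, yielding \eqref{QSMM} and closing the proof.
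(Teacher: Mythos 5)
First, note that the paper does not prove this theorem at all: it is imported verbatim from Bjelakovi\'c--Kr\"uger--Siegmund-Schultze--Szko{\l}a~\cite{BSM} and used as a black box, so your sketch can only be measured against that cited proof, whose two-scale block strategy you correctly name. Your reduction of \eqref{QSMMa}--\eqref{QSMMb} to the AEP statement \eqref{QSMM} is essentially right, but with one slip: a minimal projector $p\leq p^{(n)}(\epsilon)$ is \emph{not} in general a rank-one spectral projector of $\rho^{(n)}$; it projects onto an arbitrary unit vector $\sum_i c_i\,\vert\phi_i^{(n)}\rangle$ in the range of $p^{(n)}(\epsilon)$. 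The bound \eqref{QSMMa} still holds, but for a different reason: ${\rm Tr}(\rho^{(n)}p)=\sum_i r_i^{(n)}\vert c_i\vert^2$ is a convex combination of eigenvalues lying in the open window, hence lies in it too. That is a one-line repair.

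The genuine gap is the mechanism you propose for \eqref{QSMM}. Trace-norm closeness of $\rho^{(n)}$ to $\tilde\rho^{(n)}=(\rho^{(\ell)})^{\otimes N}$ fails: translation invariance gives exactly
\begin{equation*}
S\bigl(\rho^{(n)}\,\big\Vert\,(\rho^{(\ell)})^{\otimes N}\bigr)=N\,S(\rho^{(\ell)})-S(\rho^{(n)})=n\Bigl(\tfrac{1}{\ell}S(\rho^{(\ell)})-\tfrac{1}{n}S(\rho^{(n)})\Bigr)\ ,
\end{equation*}
which for fixed $\ell$ grows \emph{linearly} in $n$ whenever $\frac{1}{\ell}S(\rho^{(\ell)})>s(\omega)$ (the typical situation, since $\frac{1}{n}S(\rho^{(n)})$ only reaches $s(\omega)$ in the limit). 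Pinsker then yields a vacuous bound $\|\rho^{(n)}-\tilde\rho^{(n)}\|_1\lesssim\sqrt{n\,\delta_\ell}$, and Fannes' inequality runs in the wrong direction altogether---it bounds entropy differences by trace distance, not conversely. In fact, by a Stein-type argument the two states are exponentially distinguishable at rate roughly $\frac{1}{\ell}S(\rho^{(\ell)})-s(\omega)$, so the substitution ${\rm Tr}(\rho^{(n)}q)\approx{\rm Tr}(\tilde\rho^{(n)}q)$ is simply false for the projectors you need. The actual proof in~\cite{BSM} avoids any norm comparison of states: one measures $\omega$ in the product eigenbasis of $\rho^{(\ell)}$, obtaining a classical stationary process whose per-block entropy is squeezed between $\ell\,s(\omega)$ and $S(\rho^{(\ell)})$; one applies the classical Shannon--McMillan theorem to it---after an ergodic-decomposition lemma, because ergodicity of $\omega$ under $\Theta$ does \emph{not} imply ergodicity of the $\ell$-block process under $\Theta^\ell$, a second point your sketch assumes silently; one transfers typicality between the resulting commuting projector $q$ and the spectral projectors of $\rho^{(n)}$ by rank/weight operator inequalities of the form ${\rm Tr}\bigl(\rho^{(n)}P_{\rm small}\,q\bigr)\leq {\rm e}^{-n(s(\omega)+2\epsilon)}{\rm Tr}(q)$; and one kills the large-eigenvalue tail---the converse half of \eqref{QSMM}, which the achievability construction alone cannot reach---via monotonicity of relative entropy under the two-outcome measurement $\{q,\one-q\}$, where the quantity $N S(\rho^{(\ell)})-S(\rho^{(n)})$ enters as an exponential rate, not through Pinsker. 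So your architecture matches the cited proof, but the single step carrying the analytic weight would fail as written.
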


Practically speaking, typical projections with respect to a given state over a quantum spin chain project onto subspaces of high probability relative to that state. If the state is ergodic with respect to the shift along the chain, then, for sufficiently large $n$, there exist typical projections in each local sub-algebras $M_{[-n,n]}$ projecting onto subspaces of dimension close to the exponential of $n$ times the von Neumann entropy rate of the chain. Moreover, any projector onto a state vector in such subspaces has a mean value with respect to the ergodic state which is  close to the inverse of the probability of the subspace.
With these tools at disposal, we can now prove the main result in the quantum case which generalizes the result in~\cite{FSA}.

\begin{thm}
\label{thm:brudnoq}
Let $(\mathcal{M}, \Theta, \omega)$, be an ergodic quantum
spin-chain with right shift dynamics $\Theta$ and faithful
semi-computable state $\omega$ with restrictions to local algebras $M_{[0,n-1]}$ corresponding to density matrices $\rho^{(n)}$ of rank $2^n$. Then, for any $\epsilon
>0$, there exists a sequence of projections $p^{(n)}(\epsilon)\in M_{[0, n-1]}$
and a number $N_\epsilon\in \n$ such that for all $n\geq
N_\epsilon$, we have that
\begin{enumerate}
\item
$p^{(n)}(\epsilon)$ projects onto a high-probability subspace:
$$
\omega(p^{(n)}(\epsilon))={\rm Tr}\Big(\rho^{(n)}\,p^{(n)}(\epsilon)\Big)>1-\epsilon - 2^{-n\epsilon+\alpha_n}\ ;
$$
\item
the subspace dimension is controlled by the von Neumann entropy rate:
$$
(1-\epsilon-2^{-n \epsilon+\alpha_n})2^{n(s(\omega)-\epsilon)}< {\rm Tr}_n(p^{(n)}(\epsilon))  < 2^{n(s(\omega)+\epsilon)}\ ,
$$
where $\lim_{n\to \infty} \frac{\alpha_n}{n} =0$;
\item
for all minimal projections $0 \neq p^{(n)}\in M_{[0, n-1]}$ dominated by $p^{(n)}(\epsilon)$, $p^{(n)}\leq p^{(n)}(\epsilon)$, we have
$$
2^{-n(s(\omega)+\epsilon)}\leq \omega(p^{(n)})\leq    2^{-n(s(\omega)-\epsilon)}\ ;
$$
\item
while their Gacs complexities obey
$$
\lim_{n\to\infty} -\frac{1}{n}\log{\rm Tr}\Big(\hat{\mu}\,p^{(n)}\Big)=s(\omega)\ .
$$
\end{enumerate}
\end{thm}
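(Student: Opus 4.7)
The plan is to obtain the projections $p^{(n)}(\epsilon)$ by refining the typical projections supplied by the Quantum Shannon--McMillan Theorem so as to simultaneously control the spectrum of $\hat{\mu}^{(n)}$ on their range. Items (1)--(3) will then follow from the QSM bounds modulo a small correction produced by the refinement, while the Brudno-type identity (4) emerges from combining the universality of $\hat{\mu}$ (for the lower bound on ${\rm Tr}(\hat{\mu}\,p^{(n)})$) with a quantum counting argument (for the upper bound), both bridged to the $\rho^{(n)}$-spectral picture via Lemma~\ref{semi-unive}.

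\emph{Refinement step.} Start from the $\epsilon$-typical projection $\tilde p^{(n)}(\epsilon)$ of QSM, chosen as the spectral projection of $\rho^{(n)}$ onto the typical eigenvalues $r_{\ii}\in\bigl(2^{-n(s(\omega)+\epsilon)},2^{-n(s(\omega)-\epsilon)}\bigr)$. Using the unitary $U^{(n)}$ of Lemma~\ref{semi-unive}, let $q^{(n)}$ be the projection onto those eigenvectors $|r_{\ii}\rangle=U^{(n)}|\mu_{\ii}\rangle$ whose $\hat{\mu}^{(n)}$-eigenvalue satisfies $\mu_{\ii}<2^{-n(s(\omega)-2\epsilon)}$, and set $p^{(n)}(\epsilon):=q^{(n)}\,\tilde p^{(n)}(\epsilon)$. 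Levin's relation~\eqref{thm:levin2} combined with the counting bound~\eqref{compression}, used exactly as in the proof of Proposition~\ref{propaux}, yields
\[
\#\bigl\{\ii\in\Omega_2^{(n)}\,:\,\mu_{\ii}\geq 2^{-n(s(\omega)-2\epsilon)}\bigr\}\;\leq\; 2^{n(s(\omega)-2\epsilon)+\alpha_n}
\]
with $\alpha_n=O(\log n)$. Since each such eigenvector, when it lies in the range of $\tilde p^{(n)}(\epsilon)$, carries $\omega$-weight $r_{\ii}\leq 2^{-n(s(\omega)-\epsilon)}$, the total weight removed is bounded by $2^{-n\epsilon+\alpha_n}$, which is claim~(1). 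Bound~(2) follows from the QSM dimensional estimate together with $p^{(n)}(\epsilon)\leq \tilde p^{(n)}(\epsilon)$, and (3) is inherited verbatim from QSM since $p^{(n)}(\epsilon)$ is a spectral sub-projection of $\rho^{(n)}$.

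\emph{Brudno identity for $\hat{\mu}$.} For any minimal $p^{(n)}\leq p^{(n)}(\epsilon)$, the upper bound on ${\rm Tr}(\hat{\mu}\,p^{(n)})$ is built into the construction: Lemma~\ref{semi-unive} allows me to replace $\hat{\mu}$ by $U^{(n)}\hat{\mu}(U^{(n)})^\dag$ up to an error vanishing after $\tfrac{1}{n}\log$, and the spectral cut imposed by $q^{(n)}$ then forces ${\rm Tr}(\hat{\mu}\,p^{(n)})\leq 2^{-n(s(\omega)-2\epsilon)+o(n)}$. For the matching lower bound, I would combine the local restrictions $\rho^{(n)}$ into a single semi-computable semi-density matrix on $\mathbb{H}$ weighted by $\delta(n)=1/(n\log^2 n)$, in exact analogy with the auxiliary measure in Lemma~\ref{thm:class} and Proposition~\ref{propaux}. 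Universality of $\hat{\mu}$ (Theorem~\ref{univdem}) then supplies a single constant $c_\omega>0$ with $c_\omega\,\delta(n)\,\rho^{(n)}\leq \hat{\mu}^{(n)}$ for all $n$, hence by property~(3),
\[
{\rm Tr}(\hat{\mu}\,p^{(n)})\;\geq\; c_\omega\,\delta(n)\,{\rm Tr}(\rho^{(n)} p^{(n)})\;\geq\; c_\omega\,\delta(n)\,2^{-n(s(\omega)+\epsilon)}\ ,
\]
yielding $-\tfrac{1}{n}\log{\rm Tr}(\hat{\mu}\,p^{(n)})\leq s(\omega)+\epsilon+o(1)$. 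Letting $\epsilon\downarrow 0$ along a refining sequence of projections delivers (4).

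\emph{Main obstacle.} The delicate point is reconciling the two spectral pictures: the QSM theorem lives on the eigenbasis of $\rho^{(n)}$ inside $\mathbb{H}_{[0,n-1]}$, while $\hat{\mu}$ lives on the full $\mathbb{H}$ with its own eigenbasis. Lemma~\ref{semi-unive} is the bridge, but one must verify that the errors incurred when swapping $\hat{\mu}$ with $U^{(n)}\hat{\mu}(U^{(n)})^\dag$ are uniform enough in the elementary-approximation index $m$ to be absorbed into the subexponential factor $2^{\alpha_n}$ that vanishes after dividing by $n$. The rank-$2^n$ faithfulness hypothesis on $\omega$ is precisely what keeps the unitary $U^{(n)}$ well defined at every level, motivating the preliminary notion of semi-computable \emph{faithful} state introduced just before Lemma~\ref{semi-unive}.
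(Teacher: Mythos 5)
Your proposal is correct and follows essentially the same route as the paper: your $p^{(n)}(\epsilon)=q^{(n)}\tilde p^{(n)}(\epsilon)$ is exactly the paper's spectral projection onto $A^{(n)}_\epsilon\cap B^{(n)}_\epsilon$, with the same counting bound~\eqref{compreseeing-semimeasure} for items (1)--(2), and item (4) proved identically via the weighted sum $\sum_n \delta(n)\rho^{(n)}$ with universality for one direction and Lemma~\ref{semi-unive} plus the spectral cut $\mu_{\ii}<2^{-n(s(\omega)-2\epsilon)}$ for the other. The obstacle you flag (bridging the $\rho^{(n)}$- and $\hat\mu$-eigenbases uniformly in $m$) is precisely what the paper's Lemma~\ref{semi-unive} resolves, so no new idea is needed.
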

\medskip

\begin{proof}
Let the state and universal semi-density matrix restrictions to $M_{[0,n-1]}$,
$\rho^{(n)}$ and $\hat{\mu}^{(n)}$, be spectrally decomposed as in~\eqref{rosp}
and~\eqref{musp}.
Let then introduce the subsets
\begin{eqnarray}
\label{eq1}
\hskip -1cm
\Omega^{(n)}_2&\supseteq&
A^{(n)}_\epsilon=\Big\{\ii\in\Omega^{(n)}_2 : 2^{-n(s(\omega)+\epsilon)}
\leq r_{\ii}\leq 2^{-n(s(\omega)-\epsilon)}\Big\}\\
\label{eq2}
\hskip-1cm
\Omega^{(n)}_2&\supseteq&
B^{(n)}_\epsilon=\Big\{\ii\in\Omega^{(n)}_2 : \mu_{\ii}<
2^{-n(s(\omega)-2\epsilon)}\Big\}\ .
\end{eqnarray}

Using~\eqref{compreseeing-semimeasure}, the cardinality of the complement $(B_\epsilon^{(n)})^c$ of the latter subset is bounded from below by
\begin{equation}
\label{auxeq1}
\hbox{card}\Big((B_\epsilon^{(n)})^c\Big) \leq 2^{n(s(\omega)-2 \epsilon)+\alpha_n}\ ,
\end{equation}
where $\alpha_n>0 $ is a constant such that
$\lim_{n\to \infty} \frac{\alpha_n}{n} =0$.

On the other hand, from the quantum Shanonn-MacMillan
Theorem ~\cite{BSM}, one knows that
\begin{equation}
\label{auxeq2}
\sum_{\ii\in\A^{(n)}_\epsilon} r_{\ii}>1-\epsilon\ .
\end{equation}

Consider now the sequence of spectral
projections $p^{(n)}(\epsilon)\in M_{[0, n-1]}$
$$
p^{(n)}(\epsilon)=\sum_{\ii\in A^{(n)}_\epsilon\cap
B^{(n)}_\epsilon}\vert r_{\ii}\rangle\langle r_{\ii}\vert\ .
$$

Let $M_{[0, n-1]}\ni p^{(n)}\leq p^{(n)}(\epsilon)$
be a minimal projection onto a vector state, $p^{(n)}=\vert\psi^{(n)}\rangle\langle\psi^{(n)}\vert$, with
\begin{equation}
\label{re:normalize}
\vert\psi^{(n)}\rangle=\sum_{\ii\in A^{(n)}_\epsilon\cap
B^{(n)}_\epsilon} c_{\ii} \vert r_{\ii}\rangle\ ,\quad
\sum_{\ii\in A^{(n)}_\epsilon\cap B^{(n)}_\epsilon} |c_{\ii}|^2=1\ .
\end{equation}

\textbf{Proof of $1$}\quad Writing $A^{(n)}_\epsilon$ as the union of two disjoint subsets $$
A^{(n)}_\epsilon=\left(A^{(n)}_\epsilon\cap
B^{(n)}_\epsilon\right)\bigcup\Big(A^{(n)}_\epsilon\backslash B^{(n)}_\epsilon\Big)\ ,
$$
using~\eqref{auxeq1} and~\eqref{auxeq2} one estimates
\begin{eqnarray*}
{\rm Tr}\Big(\rho^{(n)} p^{(n)}(\epsilon)\Big)
&=&\sum_{\ii\in A^{(n)}_\epsilon\cap B^{(n)}_\epsilon} r_{\ii}\\
&=&
\sum_{\ii\in A^{(n)}_\epsilon} r_{\ii}-\sum_{\ii\in
A^{(n)}_\epsilon\backslash B^{(n)}_\epsilon} r_{\ii}\\
&\geq& 1-\epsilon - \sum_{\ii\in A^{(n)}_\epsilon\backslash
B^{(n)}_\epsilon} 2^{-n(s(\omega)-\epsilon)}\\
&\geq&
 1-\epsilon - 2^{-n(s(\omega)-\epsilon)}\hbox{card}\Big\{(B^{(n)}_\epsilon)^c\Big\}\\
 &\geq&
 1-\epsilon - 2^{-n(s(\omega)-\epsilon)} 2^{n(s(\omega)-2\epsilon)+\alpha_n}\\
 &\geq&
 1-\epsilon - 2^{-n\epsilon+\alpha_n}\ ,
\end{eqnarray*}
where, in the latter quantity $2^{-n \epsilon +
\alpha_n}$ can be made negligibly small by increasing $n$.
\medskip

\noindent
\textbf{Proof of $2$:}\quad Using~\eqref{eq1}, one gets
$$
{\rm Tr}\rho=1\geq\sum_{\ii\in A^{(n)}_\epsilon}r_{\ii}\geq 2^{-n(s(\omega)+\epsilon)}\,\hbox{card}\Big\{\ii\in A^{(n)}_\epsilon\Big\}\ .
$$
Furthermore, using~\eqref{auxeq1} one estimates
\begin{eqnarray*}
1-\epsilon& \leq&
\sum_{\ii\in A^{(n)}_\epsilon } r_{\ii}\leq \sum_{\ii\in A^{(n)}_\epsilon } 2^{-n(s(\omega)-\epsilon)}\\
&\leq& \hbox{card}\Big\{\ii\in A^{(n)}_\epsilon\Big\} 2^{-n(s(\omega)-\epsilon)}\ .
\end{eqnarray*}
Then, by construction,
\begin{eqnarray*}
{\rm Tr}(p^{(n)}(\epsilon))
&=& \hbox{card}\Big\{\ii\in A^{(n)}_\epsilon\cap
B^{(n)}_\epsilon\Big\}\leq\hbox{card}\Big\{\ii\in A^{(n)}_\epsilon\Big\}\\
&\leq&
2^{n(s(\omega)+\epsilon)}\\
{\rm Tr}(p^{(n)}(\epsilon))&=&
\hbox{card}\Big\{\ii\in A^{(n)}_\epsilon\cap
B^{(n)}_\epsilon\Big\}\\
&=&
\hbox{card}\Big\{\ii\in A^{(n)}_\epsilon\Big\}\,-\,\hbox{card}\Big\{\ii\in
A^{(n)}_\epsilon\backslash B^{(n)}_\epsilon\Big\}\\
&\geq&
(1-\epsilon)2^{n(s(\omega)-\epsilon)}- 2^{n(s(\omega)-2\epsilon) + \alpha_n}\\
&\geq& ((1-\epsilon)-2^{-n \epsilon+\alpha_n})\,2^{n(s(\omega)-\epsilon)}\ .
\end{eqnarray*}

\noindent
\textbf{ Proof of $3$:}\quad
Using~\eqref{eq1} and the normalization of $\vert\psi^{(n)}\rangle$, for any minimal projection $p^{(n)}\leq p^{(n)}(\epsilon)$, one estimates
$$
\omega(p^{(n)}) = \sum_{\ii\in A^{(n)}_\epsilon\cap B^{(n)}_\epsilon}
r_{\ii} |c_{\ii}|^2
\leq 2^{-n(s(\omega)-\epsilon)}\ ,
$$
while from~\eqref{eq1} and~\eqref{re:normalize}, it follows that
$$
\omega(p^{(n)}) = \sum_{\ii\in A^{(n)}_\epsilon\cap
B^{(n)}_\epsilon} r_{\ii}\,\vert c_{\ii}\vert^2\geq
 2^{-n(s(\omega)+\epsilon)}\ .
$$

\noindent
\textbf{Proof of $4$:}\quad Since the quantum state is assumed semi-computable, such is the semi-density matrix
$$
\eta=\sum_{n=2}^\infty\delta(n) \rho^{(n)}\ ,
$$
with coefficients $\delta(n)\geq 0$ that ensures convergence in trace-norm.
Therefore, there exists a constant $c>0$ such that $c\delta(n)\,
\rho^{(n)}\leq c\,\eta\leq \hat{\mu}$, for all $n\in\n$, whence
\begin{eqnarray*}
\limsup_{n\to\infty}-\frac{1}{n}\log{\rm Tr}(\hat{\mu}p^{(n)})
 &\leq&
\limsup_{n\to\infty}-\frac{1}{n}\log{\rm Tr}(\rho^{(n)}
p^{(n)})\\
 &\leq&
\limsup_{n\to\infty}-\frac{1}{n}\log\omega(p^{(n)})\\
& \leq&
\limsup_{n\to\infty}-\frac{1}{n}\log  \left((1-2^{-n\epsilon}) 2^{-n(s(\omega)+\epsilon)+\alpha_n} \right)  \\
&\leq&
 s(\omega)+\epsilon +\alpha_n.
\end{eqnarray*}
Thus,
$$ \limsup_{n\to\infty} -\frac{{\rm Tr}(\hat{\mu} p^{(n)})}{n}\leq  s(\omega)+\epsilon\ . $$
Since the rank of $\rho^{(n)}$ is $2^n$,
the operator
$$
\hat{T}^{(n)}=\sum_{\ii\in\Omega_2^{(n)}} \mu_{\ii}\,\vert r_{\ii}
\rangle\langle r_{\ii}\vert\ ,
$$
is a semi-computable semi-density matrix (see Remark~\ref{rem2}).

Therefore,  there exists $c_n>0$ such that $c_n
\hat{T}^{(n)}\leq \hat{\mu}^{(n)}$.
Let $U^{(n)}$ be the unitary operator such that
$U^{(n)}\vert\mu_{\ii}\rangle=\vert r_{\ii}\rangle$ (see~\eqref{musp} and~\eqref{rosp}).
By Lemma \ref{semi-unive}, we have
%$$Q_n= \sum_{\ii\in B_n(\epsilon)} |r_{\ii}><r_{\ii}| , \quad  B_n(\epsilon)=\{\ii: \mu(\ii)\geq 2^{-n(s(\omega)+2\epsilon)},\,\, \ii\in A^{(n)}_\epsilon   \}   $$
\begin{eqnarray*}
&&
\liminf_{n\to\infty}-\frac{1}{n}\log{\rm Tr}\Big(\hat{\mu}\ p^{(n)}\Big)
\geq
\liminf_{n\to\infty}-\frac{1}{n}\log{\rm Tr}\Big( U^{(n)} \hat{\mu}^{(n)}\ (U^{(n)})^\dag p^{(n)}\Big)\\
&&\hskip 1cm
\geq
\liminf_{n\to\infty}-\frac{1}{n}\log{\rm Tr}\Big(\hat{T}^{(n)}\ p^{(n)}\Big)\\
&&\hskip 1cm
\geq
\liminf_{n\to\infty}-\frac{1}{n}\log\Big(\sum_{\ii\in \Omega^{(n)}_2}
\mu_{\ii}\,\langle r_{\ii}\vert p^{(n)}\vert r_{\ii}\rangle\Big)\\
&&\hskip 1cm
=
\liminf_{n\to\infty}-\frac{1}{n}\log\Big(\sum_{\ii\in A^{(n)}_\epsilon \cap B^{(n)}_\epsilon}
\mu_{\ii}\,\langle r_{\ii}\vert p^{(n)}\vert r_{\ii}\rangle\Big)\\
&& \hskip 1cm
\geq
\liminf_{n\to\infty}-\frac{1}{n}\log\Big(2^{-n(s(\omega)-2\epsilon)}
\sum_{\ii\in A^{(n)}_\epsilon \cap B^{(n)}_\epsilon}\,\langle r_{\ii}\vert p^{(n)}\vert r_{\ii}\rangle\Big)\\
&&\hskip 1cm
= s(\omega)-2\epsilon\ ,
\end{eqnarray*}
where the two equalities follow since $p^{(n)}$ is a projection such that
\begin{eqnarray*}
&&
\langle r_{\ii}\vert p_n\vert r_{\ii}\rangle=0\qquad \forall \ii\in \Big(A^{(n)}_\epsilon \cap B^{(n)}_\epsilon\Big)^c\\
&&
\sum_{\ii\in A^{(n)}_\epsilon \cap B^{(n)}_\epsilon}\,\langle r_{\ii}\vert p^{(n)}\vert r_{\ii}\rangle=\sum_{\ii\in \Omega^{(n)}_2}\,\langle r_{\ii}\vert p^{(n)}\vert r_{\ii}\rangle={\rm Tr}(p^{(n)})=1\ ,
\end{eqnarray*}
while the last inequality can be derived from
$$
\mu_{\ii}\leq 2^{-n(s(\omega)-2\epsilon)}\qquad \forall \ii\in A^{(n)}_\epsilon \cap B^{(n)}_\epsilon\ ,
$$
that follows from the definition of the set $B^{(n)}_\epsilon$ in~\eqref{eq2}.
\end{proof}

\section{Conclusions}

We have applied the quantum complexity introduced by P. Gacs in~\cite{Gacs} in two different scenarios. The first one concerns its use in evaluating the complexity of the trajectories of classical ergodic dynamical systems: in such a case, we showed that the Gacs complexity rate of almost every trajectory equals the Kolmogorov-Sinai dynamical entropy exactly as Brudno's theorem does for the Kolmogorov complexity. 
The second scenario consists of a quantum spin chain endowed with a translational invariant ergodic state for which we proved a full quantum Brudno's relation in that the equality between the Gacs complexity rate and the chain entropy density, already shown 
in~\cite{FSA}, holds on a dense set of vector states. This last condition is the quantum counterpart of the almost everywhere condition in the classical formulation of Brudno's theorem.

\bigskip
\bigskip
\noindent
{\bf Acknowledgement}\quad Samad Khabbazi Oskouei is happy to acknowledge the support of the STEP programme of the Abdus Salam ICTP of Trieste.

\end{document}